\theoremstyle{definition}
\newtheorem{definition}{Definition}
\newtheorem{theorem}{Theorem}
\newtheorem{proposition}{Proposition}
\newtheorem{example}{Example}
\newcommand{\Upd}{\mathit{Upd}}
\newcommand{\Pay}{\mathit{Pay}}
\newcommand{\Nfin}{N^{\mathit{fin}}}
\newcommand{\Mfin}{M_{\mathit{fin}}}
\algnewcommand\algorithmiccase{\textbf{case}}
\algrenewcommand\algorithmicindent{1em} 
\newenvironment{breakablealgorithm}
  {
   \begin{center}
     \refstepcounter{algorithm}
     \hrule height.8pt depth0pt \kern2pt
     \renewcommand{\caption}[2][\relax]{
       {\raggedright\textbf{\fname@algorithm~\thealgorithm} ##2\par}%
       \ifx\relax##1\relax 
         \addcontentsline{loa}{algorithm}{\protect\numberline{\thealgorithm}##2}%
       \else 
         \addcontentsline{loa}{algorithm}{\protect\numberline{\thealgorithm}##1}%
       \fi
       \kern2pt\hrule\kern2pt
     }
  }{
     \kern2pt\hrule\relax
   \end{center}
  }
\title{Formal Verification of Diffusion Auctions\thanks{This is an extended version of the paper with the same title that will appear in the proceedings of AAAI 2026. This version contains a technical appendix with proof details that, for space reasons, do not appear in the AAAI 2026 version.}
}
\author{ 
   Rustam Galimullin\textsuperscript{\rm 1}, Munyque Mittelmann\textsuperscript{\rm 2},  Laurent Perrussel\textsuperscript{\rm 3}
}
\begin{document}

\maketitle

\begin{abstract}

In diffusion auctions, sellers can leverage an underlying social network to broaden participation, thereby increasing their potential revenue. 
Specifically, sellers can incentivise participants in their auction to diffuse information about the auction through the network. While numerous variants of such auctions have been recently studied in the literature, the formal verification and strategic reasoning perspectives have not been investigated yet.   

Our contribution is threefold. First, we introduce a logical formalism that captures the dynamics of diffusion and its strategic dimension. Second, for such a logic, we provide model-checking procedures that allow one to verify properties as the Nash equilibrium, and that pave the way towards checking the existence of sellers' strategies. Third, we establish computational complexity results for the presented algorithms. 
\end{abstract}



\section{Introduction}

In auction theory and mechanism design \cite{Nisan2007}, the set of participants is typically 
fixed and socially independent, in the sense that any underlying social network among agents is not taken into account. In contrast, by leveraging agents' social networks, a seller could use buyers' connections to promote the auction \cite{guo2021emerging}. This has a clear advantage: a larger market may include participants with higher valuations, leading to a potential increase in social welfare or the sellers' revenue. On the other hand, buyers act as competitors and have no incentives to invite more participants, as doing so would increase competition and reduce their likelihood of securing the item being auctioned. 

The challenge of encouraging participants to propagate the auction among their social connections has recently sparked interest in the mechanism design community \cite{zhao2021mechanism}.
In particular, it led to the introduction of \textit{diffusion auctions} \cite{zhao2018selling,LiHGZ22}, where sellers propose incentives to buyers so that they can 
benefit from inviting their neighbours. The intuition is that the mechanism guarantees that the buyer’s new utility after propagating the auction is not less than her utility of participating in the auction with the original participants. 
Their main benefit is the increase in the number of participants while guaranteeing economic properties such as incentive-compatibility or optimality \cite{ZhangZZ24}.  
Yet, two critical aspects remain unexplored --- the strategic behaviour of sellers in diffusion auctions, especially when multiple sellers 
compete to reach the most valuable buyers, and the formal verification of such mechanisms. 





In the last decades, a number of logics have been proposed to reason about agents' 
strategic capabilities with prime examples being 
\textit{Coalition Logic} (CL) \cite{pauly02},   \textit{Alternating-time Temporal Logic} (ATL) \cite{alur2002}, and \textit{Strategy Logic}  \cite{MogaveroMPV14}.
Combined with model-checking techniques \cite{clarke2018handbook}, these frameworks provide powerful tools for specification and verification of multi-agent systems, with applications to several problems, from   
the 
analysis of voting protocols \cite{BelardinelliCDJ21,JamrogaKM22} to the verification of auctions and  mechanism design \cite{MittelmannAIJ2025,MittelmannMMP23}.

In this paper, we provide a formal framework for the specification and verification of strategic properties in diffusion auctions. In doing so, we combine the intuitions from \textit{social network logics} \cite{minathesis}, \textit{dynamic epistemic logic} (DEL) \cite{hvdetal.del:2007}, as well as the aforementioned CL and ATL.  We believe that this is \textit{the first logic-based approach to formal verification of diffusion auctions and strategic abilities of sellers in them}.  


\paragraph{Contribution} 
We introduce the \textit{$n$-seller logic for diffusion incentives} $
\mathcal{L}^n$ and its \textit{strategic} variant $\mathcal{SL}^n$. These logics are interpreted on diffusion auction mechanism models that are quite general and thus capture a wide variety of mechanisms. Both $\mathcal{L}^n$ and $\mathcal{SL}^n$ allow us to capture the dynamics of diffusion of information about auctions and their strategic dimension. By `dynamics' here, we mean the change of the underlying social network as a result of sellers proposing incentives to buyers to invite their neighbours to an auction. For these logics, we provide model-checking procedures that allow one to verify properties such as Nash equilibrium, and that pave the way towards checking the existence of sellers' strategies. For the presented algorithms, we establish computational complexity results.


\section{Diffusion Auctions With Multiple Sellers}


We start by presenting a formal framework for multiple-seller auctions, where each seller is selling (a copy of) the same item. Sellers and buyers in such a setting are connected via an underlying social network, whose structure the sellers can try to exploit by incentivising their direct neighbours (i.e., buyers participating in the sellers' auctions) to invite all their friends to join the corresponding auction.

Let $\mathsf{S} = \{\sigma_1, \sigma_2, ..., \sigma_n\}$ be a finite non-empty set of $n$ names of sellers, and $\mathsf{B} = \{\beta_1, \beta_2, ...\}$ be a countable set of names of buyers, such that $\mathsf{S} \cap \mathsf{B} = \emptyset$.  Also, let $\mathsf{Nom} = \mathsf{S} \cup \mathsf{B}$ denote the total set of agent names, or nominals. We will also write $\mathsf{B}^\bullet = \mathsf{B} \cup \{\bullet\}$ and $\mathsf{Nom}^\bullet = \mathsf{Nom} \cup \{\bullet\}$, where nominal $\bullet$ intuitively stands for `the current agent'. Finally, let $\mathsf{Terms} = \{ut_\alpha \mid \alpha \in \mathsf{Nom}^\bullet\}$ 
be a set of terms denoting utilities of agents $ut_\alpha$ and a 
term $ut_\bullet$ denoting the utility of the current agent.

\begin{definition}
The language $\mathcal{L}^n$ of the \emph{$n$-seller logic for diffusion incentives} is defined by the following grammar:
    \begin{align*}
    &\varphi:= &&\alpha\mid 
    (z_1 t_1 + ...+z_m t_m) \geqslant z  \mid \lnot \varphi  \mid (\varphi \land \varphi) 
    \mid \square\varphi \mid \\ 
   & &&[\sigma_1: \beta_1, ..., \sigma_n: \beta_n] \varphi \mid \heartsuit\gamma,
\end{align*}
where $\alpha \in \mathsf{Nom}$, $z \in \mathbb{Z}$,
$t_i \in \mathsf{Terms}$,
$\sigma_i,\sigma_j \in \mathsf{S}$ with $\sigma_i \neq \sigma_j$, $\beta_i \in \mathsf{B}^\bullet$, and $\gamma \in \mathsf{Nom}^\bullet$. Here, $\square \varphi$ means `all friends of the current agent satisfy $\varphi$', and $\heartsuit \gamma$ means that agent named $\gamma$, which can be either a seller or a buyer, gets an item in the current configuration of a mechanism. 

Constructs $[\sigma_1: \beta_1, \sigma_2:\beta_2, ..., \sigma_n: \beta_n] \varphi$ (abbreviated as $[\overline{\sigma}:\overline{\beta}]\varphi$), where $n = |\mathsf{S}|$, capture the concurrent information diffusion about auctions. This is done by sellers $\sigma_i$ incentivising the respective buyers $\beta_i$, i.e., paying them some sum, to invite all their friends to join the seller's auction. 
Clause $\sigma_i: \bullet$ denotes the case in which seller $\sigma_i$ does not incentivise anyone, i.e., she does nothing or skips her turn.
Hence, we will write, e.g., $[\sigma_1: \beta_1, \sigma_2:\beta_2]\varphi$ if only sellers named $\sigma_1$ and $\sigma_2$ do not choose $\bullet$. Given $[\overline{\sigma}:\overline{\beta}]\varphi$, we will denote as $\overline{\sigma}^{\setminus \bullet}$ the set of agents $\sigma_i \in \overline{\sigma}$ such that the corresponding $\beta_i \neq \bullet$. Observe that even though this modality is concurrent, i.e., everyone is making moves in parallel, we can easily model consecutive moves by selecting $\bullet$ for all the agents who are not playing in the current turn. For the case of $1$-seller auctions $\mathcal{L}^1$, we will write $[\beta]\varphi$ instead of $[\sigma:\beta] \varphi$. Finally, having a sequence $\Upd = \sigma_1: \beta_1, ..., \sigma_n: \beta_n$, we will denote by $\Upd(\sigma_i)$ the corresponding buyer's name $\beta_i$.

Duals are defined as $\Diamond \varphi := \lnot \square \lnot \varphi$ and $\langle \overline{\sigma}:\overline{\beta} \rangle \varphi := \lnot [ \overline{\sigma}:\overline{\beta} ] \lnot \varphi$. For the linear inequalities\footnote{Originally, these linear inequalities in a logical context were used to capture reasoning about probabilities \cite{fagin90}. Recently, they were also used to express budgets and costs in dynamic epistemic logic \cite{DolgorukovGG24}. We follow the latter approach in this work. }, we can use the following abbreviations: $t_1 - t_2 \geqslant z$ for $t_1 + (-1)t_2 \geqslant z$,
$t_1 \geqslant t_2$ for $t_1 - t_2 \geqslant 0$,
$t_1\leqslant z$ for $-t_1\geqslant -z$, $t_1<z$ for $\neg (t_1 \geqslant z)$,
and $t_1 = z$ for $(t_1\geqslant z)\land (t_1\leqslant z)$. 
We can also use rational numbers in our formulas via abbreviations (e.g., $t\geqslant \frac{1}{2}$ is an
abbreviation for $2t\geqslant 1$).  
All other standard abbreviations of logic 
and the rules for removing parentheses hold.
\end{definition}

\begin{example}
With our language, we can express various desirable properties of mechanisms, both static and dynamic. The following formulas are examples for $\mathcal{L}^1$.
\begin{itemize}
    \item $ut_\alpha = 3 \land [\alpha ] (ut_\alpha > 3)$ for `the utility of agent $\alpha$ is 3, and after she was incentivised by the seller to invite her friends to participate in the auction, her utility increased'.
    \item $ut_\bullet = 5 \land \square (ut_\bullet \geqslant 5) \land \Diamond(\alpha \land \heartsuit \alpha)$ for `the utility of the current agent is 5, and all her friends have utilities of at least 5, and she also has a friend 
    $\alpha$ who gets an item'. 
\end{itemize}
\end{example}

Formulas of $\mathcal{L}^n$ are interpreted on diffusion auction mechanisms.

\begin{definition}
A \emph{market network with $n$ sellers} $\mathcal{M}$ is a tuple $(Agt, F, Bdg, V, I, N)$, where 
\begin{itemize}
    \item $Agt = B \cup S$ is the set of agents, where $B = \{a,b,c,...\}$ is a non-empty set of \emph{buyers}, and $S = \{s_1, ..., s_n\}$ is a non-empty set of \emph{sellers}, and $B \cap S = \emptyset$;
    \item $F:Agt \to 2^{B}$ is a symmetric irreflexive \emph{friendship} (neighbour) relation;
    \item $Bdg:Agt \to \mathbb{Q}^+ \cup \{0\} $ is a non-negative \emph{budget} for each agent;
    \item $V:B \to \mathbb{Q}^+ \cup \{0\}$ s.t. $V(a) \leqslant Bdg(a)$ assigns to each buyer a non-negative \emph{valuation} of the item being sold;
    \item $I:B\times S \to \mathbb{Q}^+ \cup \{0\}$ assigns to each buyer the non-negative \emph{incentive} that each seller is willing to pay to them to invite their friends; 
    \item $N  = N_{\mathsf{S}} \cup N_{\mathsf{B}}$ is a \emph{naming function}, where $N_{\mathsf{S}} : \mathsf{S} \to S$ and $N_{\mathsf{B}} : \mathsf{B} \to B$ are surjective functions\footnote{Observe that function $N$ is well-defined since $\mathsf{S} \cap \mathsf{B} = \emptyset$.}.
\end{itemize} 

An \emph{$n$-seller diffusion auction mechanism} ($n$-DAM, or DAM) $M$ is a tuple $(\mathcal{M}, P, \Pay, U)$, where 
\begin{itemize}
    \item $\mathcal{M}=(Agt, F, Bdg, V, I, N)$ is a market network with $n$ sellers;
    \item $P_{\mathcal{M}}: Agt \to \{0,1\}$   is the \emph{allocation (placement) function}, which specifies whether an agent receives an item in an auction conducted within the market network $\mathcal{M}$;
    \item $\Pay_{\mathcal{M}}: B \to \mathbb{Q}^+ \cup \{0\}$  is the \emph{payment function},  which specifies the value each buyer should pay in an auction run within the market network $\mathcal{M}$;
    \item $U_{\mathcal{M}}: Agt \to \mathbb{Q}^+ \cup \{0\}$ is the \emph{utility function}.
\end{itemize}

We omit subscripts $\mathcal{M}$ whenever it does not cause confusion. We write $M,a$ to refer to a specific agent $a$ in $M$.
\end{definition}

Observe that the definitions of the allocation, payment, and utility functions do not specify their exact details. This makes our definition of diffusion auction mechanisms general, allowing us to incorporate various types of mechanisms. The only restriction we put on the functions is that their complexity is no greater than the complexity of the model-checking problem of a given logic. As we will see later, model checking $\mathcal{L}^n$ is in P, and thus in this section we assume that $P$, $\Pay$, and $U$ are \emph{computable in polynomial time}. 
While the optimal allocation function in combinatorial auctions is NP-complete 
\cite{Nisan2007}\footnote{The optimal allocation function is used to compute allocations and payments
in the Vickrey–Clarke–Groves mechanism \cite{vickrey1961counterspeculation,clarke1971multipart,groves1973incentives}.  
}, several mechanisms whose functions are computable in polynomial time have been proposed. Those include a strategyproof combinatorial auction \cite{dobzinski2012computational}, a double auction mechanism in social networks \cite{xu2020design}, and  McAfee's double auction mechanism \cite{mcafee1992dominant}. 



For our running examples, we will use the \emph{single item, multiple units, first price} (SMF) auctions.
\begin{definition}[SMF Auction]
\label{def:SMF}
  Given a market network with $n$ sellers  $\mathcal{M} = (Agt, F, Bdg, V, I, N)$, the placement function $P$ is defined as follows.

  For a seller $s_i$, let $\mathfrak{s}_i$ be the ordered set of valuations $V(a)$ of buyers $a$ such that $a \in F(s_i)$\footnote{Observe that having sets $\mathfrak{s}_i$, and not multisets,  suffice as each buyer submits at most one valuation.}. The ordering is from the highest to the lowest valuations, where the ties are broken by the lexicographic ordering of the buyers. 
  This set is totally ordered. We denote the first element of $\mathfrak{s}_i$ as $\mathfrak{s}_i (1)$.

  We refine the sets $\mathfrak{s}_i$ to account for the distribution of items. A refinement of a totally ordered set $\mathfrak{s}_i$, denoted by $\overline{\mathfrak{s}}_i$, is defined as $\overline{\mathfrak{s}}_i = \mathfrak{s}_i \setminus \{\mathfrak{s}_j (1)| 0 < j < i\}$. Intuitively, a refinement is the set of bidders in the current auction minus those bidders that are already getting the item from some other auction.  
Finally, for all $s_i \in S$, if $\overline{\mathfrak{s}}_i$ is non-empty, then $P(a) = 1$ and $P(s_i) = 0$ for $V(a) = \overline{\mathfrak{s}}_i (1)$, and $P(s_i) = 1$ otherwise. Observe that in SMF auctions, it is implied that buyers strive to acquire only one (copy of the) item, and hence some sellers may end up not selling their items.

  As for the $\Pay$ function, buyers, if they get an item, pay the amount equal to their valuation of the item.
Function $U$ is defined as follows: for a buyer $a \in B$, $U(a) = Bdg(a) - V(a)\cdot P(a)$; 
for the seller $s_i$, $U(s_i) = V(a) + Bdg(s_i)$, where $V(a) = \overline{\mathfrak{s}}_i (1)$, and $U(s_i) = Bdg(s_i)$ if $\overline{\mathfrak{s}}_i$ is empty.
\end{definition}

\begin{example}
As an example of how the placement function works in Definition \ref{def:SMF}, consider two mechanisms in Figure \ref{fig:placement}. In both mechanisms, all valuations and incentive values are identical. Moreover, $\heartsuit$ denotes the allocation of items.

\begin{figure}[h!]
\centering
\scalebox{0.8}{
   \begin{tikzpicture}
\node[circle,draw=black, minimum size=4pt,inner sep=0pt, fill = black, label=above:{$s_1$}](s1) at (0,0) {};
\node[circle,draw=black, minimum size=4pt,inner sep=0pt,, fill = black , label=left:{$b_1:\heartsuit$}](d) at (0,-2) {};
\node[circle,draw=black, minimum size=4pt,inner sep=0pt, , fill = black, label=above:{$s_2$}](a) at (2,0) {};
\node[circle,draw=black, minimum size=4pt,inner sep=0pt, , fill = black, label=right:{$b_2:\heartsuit$}](e) at (2,-2) {};

\draw[thick] (s1) to (d);
\draw[thick] (s1) to (e);
\draw[thick] (a) to (e);
\draw[thick] (a) to (d);

\end{tikzpicture}
  \begin{tikzpicture}
\node[circle,draw=black, minimum size=4pt,inner sep=0pt, fill = black, label=above:{$s_1$}](s1) at (0,0) {};
\node[circle,draw=black, minimum size=4pt,inner sep=0pt,, fill = black , label=right:{$b:\heartsuit$}](d) at (1,-2) {};
\node[circle,draw=black, minimum size=4pt,inner sep=0pt, , fill = black, label=above:{$s_2:\heartsuit$}](a) at (2,0) {};

\draw[thick] (s1) to (d);
\draw[thick] (d) to (a);

\end{tikzpicture}
}
\caption{Mechanisms $M_1$ (left) and $M_2$ (right).}
   \label{fig:placement}
\end{figure}
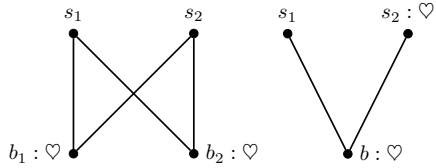 

In $M_1$ we have that both buyers participate in both auctions run by $s_1$ and $s_2$, and each buyer wins an item. For the 
sellers, sets $\mathfrak{s}_1$ and $\mathfrak{s}_2$ are $\{V(b_1), V(b_2)\}$. Recall that $V(b_1) = V(b_2)$ and we use the lexicographic preference ordering to break ties. The refinement $\overline{\mathfrak{s}}_1 = \{V(b_1), V(b_2)\}$, and the refinement $\overline{\mathfrak{s}}_2 = \{V(b_1), V(b_2)\} \setminus \{V(b_1)\} = \{V(b_2)\}$. Hence, by the definition of the placement function, $P(b_1) = 1$, $P(b_2) = 1$, $P(s_1) = 0$, and $P(s_2) = 0$.

Now, let us take a look at mechanism $M_2$, where we again have two sellers but now only one buyer. There, $\mathfrak{s}_1 = \mathfrak{s}_2 = \{V(b)\}$, and the corresponding refinements are $\overline{\mathfrak{s}}_1 = \{V(b)\}$ and $\overline{\mathfrak{s}}_2 = \{V(b)\} \setminus \{V(b)\} = \emptyset$ (using lexigraphic ordering over sellers as a tie-breaking rule). Hence, the placement function is $P(b) = 1$,  $P(s_1) = 0$, and $P(s_2) = 1$. Note that following Definition \ref{def:SMF},  seller $s_2$ keeps her item. 
\end{example}

Since sellers can incentivise buyers to invite their friends to join an auction, our mechanisms are \textit{dynamic}, i.e., the social network structure changes as a result of sellers' actions. To capture this, we define the \textit{update} of a mechanism.  In our definition, we assume that each buyer can be incentivised by one seller at a time. And, moreover, since buyers are rational, if they are offered incentives from two separate sellers, they choose the higher incentive, and their friends join the auction of the seller offering the higher incentive. 

\begin{definition}
    Let an $n$-DAM $M = ((Agt, F, Bdg, V, I, N),$ $P,$ $\Pay,$ $U)$ and a sequence $\Upd = \sigma_1: \beta_1, ..., \sigma_n: \beta_n$ be given. An $n$-DAM \emph{updated by the concurrent invitations by $n$ sellers} 
    $M^{\Upd}$ is a tuple $((Agt, F^{\Upd}, Bdg^{\Upd}, V, I, N), P, \Pay, U)$, where for all $s \in S$, if $N(\sigma_i) = s$, $\Upd(\sigma_i) = \beta_i \neq \bullet$,   and $s = \arg\,\max_{s\in S}I(N(\beta_i),s')$, then
    \begin{itemize}
        \item  
        $F^{\Upd}(s) = F(s) \cup \{b | b \in B \text{ and }b \in F(N(\beta_i))\}$,
        \item 
        $Bdg^{\Upd}(s) =$ $Bdg(s) - I(N(\beta_i),s)$ and $Bdg^{\Upd}(N(\beta_i)) =$ $Bdg(N(\beta_i)) + I(N(\beta_i),s)$.
    \end{itemize}
    Ties are broken by the lexicographic order of sellers' names\footnote{In particular, if there are two or more sellers offering a buyer the same maximal incentive, the buyer propagates the auction information of the seller that appears first in the lexicographic order.}.
\end{definition}

Intuitively, given a concurrent information diffusion operator $\overline{\sigma}: \overline{\beta}$, in the updated mechanism all friends of buyer $\beta_i$ will join the auction run by $\sigma_i$ if $\sigma_i$ is one of the sellers that incentivise $\beta_i$, and moreover, offers the highest incentive. Then, the seller's budget is reduced by the value of the incentive, and the budget of the corresponding buyer is increased by the value of the incentive.    

\begin{definition}
Let $M = ((Agt, F, Bdg, V, I, N),$ $P, \Pay, U)$ be an $n$-DAM. The semantics of $\mathcal{L}^n$  is defined by induction as follows: 
    \begin{alignat*}{1}
        &M, a \models \alpha  \text{ iff }  N(\alpha) = a\\
        &M, a \models \lnot \varphi  \text{ iff }  M, a \not \models \varphi\\
        &M, a \models\varphi \land \psi  \text{ iff }  M, a \models \varphi \text{ and } M, a \models \psi\\
        &M, a \models \square \varphi  \text{ iff }  \forall b \in Agt: b \in F(a) \text{ implies } M, b \models \varphi\\
&M, a \models [\overline{\sigma}:\overline{\beta}] \varphi  \text{ iff }  \text{if }  \forall \sigma_i \in \overline{\sigma}^{\setminus \bullet}: 
        N(\beta_i) \in F(N(\sigma_i)) \text{ and }\\ 
        &\quad \quad Bdg(N(\sigma_i)) \geqslant I(N(\beta_i),N(\sigma_i)), \text{ then } M^{\overline{\sigma}:\overline{\beta}},a \models \varphi\\
        &M,a \models \heartsuit \alpha  \text{ iff }  
        \begin{cases}
        P(N(\alpha)) = 1 &\text{ if } \alpha \neq \bullet,\\
        P(a) = 1 &\text{ if } \alpha = \bullet\\
        \end{cases}\\
        &M,a \models \sum_{i = 1}^m z_i t_i  \geqslant z  \text{ iff }  \sum_{i = 1}^m z_i t'_i \geqslant z, \text{ where } \\
        &\qquad \quad t'_i =
    \begin{cases}
        U(N(\alpha)) &\text{ if } t_i = ut_\alpha,\\
         U(a) &\text{ if } t_i = ut_{\bullet}.
    \end{cases}
\end{alignat*}     

\end{definition}

The semantics definition for the clause $M, a \models [\overline{\sigma}:\overline{\beta}] \varphi$ checks whether all sellers, who did not choose to skip their turn, are (i) incentivising the buyers that currently participate in their auction, and (ii) whether the sellers have sufficient budgets to pay the incentives. If (i) and (ii) hold, then $\varphi$ is evaluated in the updated mechanism  $M^{\overline{\sigma}:\overline{\beta}},a$. 
The dual $\langle \overline{\sigma}:\overline{\beta} \rangle \varphi$ holds iff (i) and (ii) hold, and $M^{\overline{\sigma}:\overline{\beta}},a \models \varphi$.
For clauses of $\heartsuit \alpha$ and linear inequalities, we distinguish the cases of an agent named $\alpha$ and the current agent, denoted by nominal $\bullet$. 


\begin{example}
    Consider  1-seller SMF mechanism $M$ and its update $M^{\sigma:\alpha}$ in Figure \ref{fig:example1}.
    \begin{figure}[t!]
\centering
\scalebox{0.8}{
\begin{tikzpicture}
\node[circle,draw=black, minimum size=4pt,inner sep=0pt, fill = black, label=below:{$s:\sigma, 5$}](1) at (4.5,0) {};
\node[circle,draw=black, minimum size=4pt,inner sep=0pt,, fill = black , label=above:{$a:\alpha, (3, 1,5)$}](2) at (3,1) {};

\node[circle,draw=black, minimum size=4pt,inner sep=0pt, , fill = black, label=below:{$b:\beta, (8,2,6)$}](3) at (3,-1) {};
\node[circle,draw=black, minimum size=4pt,inner sep=0pt, , fill = black, label=above:{$c:\gamma, (9, 9,1)$}](4) at (6,1) {};
\node[circle,draw=black, minimum size=4pt,inner sep=0pt, , fill = black, label=below:{$d:\delta, (11,10,0)$}](5) at (6,-1) {};

\draw [thick](1) to (2);
\draw [thick](1) to (3);
\draw [thick](2) to (3);
\draw [thick](2) to (4);
\draw [thick](5) to (4);

\end{tikzpicture}
\begin{tikzpicture}
\node[circle,draw=black, minimum size=4pt,inner sep=0pt, fill = black, label=below:{$s:\sigma, 0$}](1) at (4.5,0) {};
\node[circle,draw=black, minimum size=4pt,inner sep=0pt,, fill = black , label=above:{$a:\alpha, (8, 1,5)$}](2) at (3,1) {};

\node[circle,draw=black, minimum size=4pt,inner sep=0pt, , fill = black, label=below:{$b:\beta, (8, 2,6)$}](3) at (3,-1) {};
\node[circle,draw=black, minimum size=4pt,inner sep=0pt, , fill = black, label=above:{$c:\gamma, (9, 9,1)$}](4) at (6,1) {};
\node[circle,draw=black, minimum size=4pt,inner sep=0pt, , fill = black, label=below:{$d:\delta, (11, 10,0)$}](5) at (6,-1) {};

\draw [thick](1) to (2);
\draw [thick](1) to (3);
\draw [thick](2) to (3);
\draw [thick](2) to (4);
\draw [thick](5) to (4);
\draw [thick,dashed](1) to (4);

\end{tikzpicture}
}

\caption{Mechanism $M$ (left) and updated mechanism $M^{\sigma:\alpha}$ (right). For the seller $s$, her name is $\sigma$ and her budget in $M$ is 5. For buyer $a$, her name is $\alpha$, and $(3, 1,5)$ denotes the fact that $Bdg(a) = 3$, $V(a) = 1$, and $I(a, s) = 5$. Similarly, for other agents. The new link in $M^\alpha$ is dashed.}
\label{fig:example1}
\end{figure}
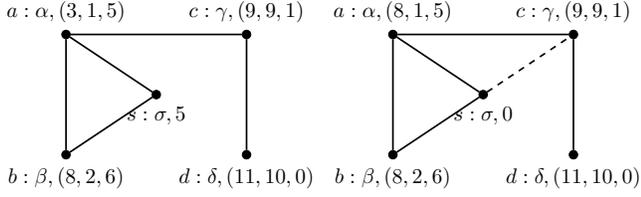 
    We have, for example, that $M,a \models (ut_\sigma = 7) \land \heartsuit \beta \land \langle \alpha \rangle (ut_\sigma = 9 \land \heartsuit \gamma)$, where the utility of the seller $\sigma$ increases after she incentivises the agent named $\alpha$ to invite her friends. Indeed, in mechanism $M$, the seller's budget is 5, and the highest valuation among the buyers participating in the auction is 2 (from agent $\beta$). Hence, the total utility of the seller is 7. Since agent $\beta$ is currently the highest bidder, she would have been the winner of the auction in the current mechanism (conjunct $\heartsuit \beta$). After the seller incentivises $\alpha$ to invite all her friends to 
    the auction, the budget of $\alpha$ increases by her incentive (i.e. by $5$ to the total 8), and agent $\gamma$ joins the auction (dashed line in $M^{\sigma:\alpha}$). Since her valuation 
    is the highest, she gets the item (conjunct $\heartsuit \gamma$) and the utility of the seller increases to 9. 
    We also have that $M,a \models \lnot \langle \alpha \rangle \langle \gamma \rangle (ut_\sigma >9 )$, i.e., the seller cannot increase her utility further by trying to reach the richest buyer $\delta$ as she does not have enough budget for two rounds of referrals.

    Now we turn to a 2-seller example. Consider the SMF mechanism $M$ and its updates in Figure \ref{fig:example2}. 
We assume that sellers $s_1$ and $s_2$ have both budgets 1, and that for both of them and all buyers, the incentives are 1, i.e., each seller can incentivise only one 
buyer. Moreover, let us assume that each buyer, apart from $b$ and $e$, evaluates the item as 1. Buyer $b$ has valuation 4, and buyer $e$ has valuation 3.

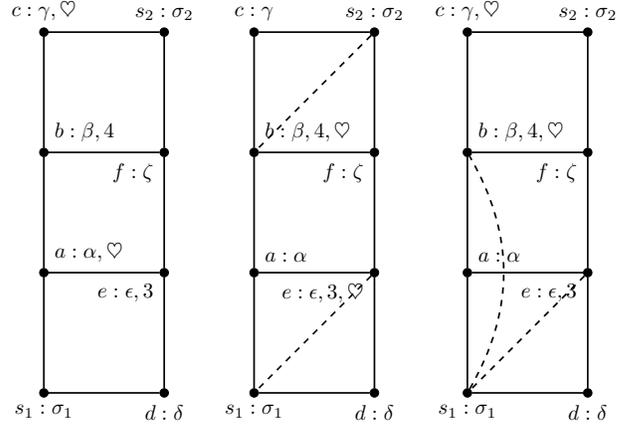
\begin{figure}
\centering
\scalebox{0.8}{
   \begin{tikzpicture}[rotate = 90]
\node[circle,draw=black, minimum size=4pt,inner sep=0pt, fill = black, label=below:{$s_1:\sigma_1$}](s1) at (0,0) {};
\node[circle,draw=black, minimum size=4pt,inner sep=0pt,, fill = black , label=below:{$d:\delta$}](d) at (0,-2) {};
\node[circle,draw=black, minimum size=4pt,inner sep=0pt, , fill = black, label=above right:{$a:\alpha, \heartsuit$}](a) at (2,0) {};
\node[circle,draw=black, minimum size=4pt,inner sep=0pt, , fill = black, label=below left:{$f:\zeta$}](f) at (4,-2) {};
\node[circle,draw=black, minimum size=4pt,inner sep=0pt, , fill = black, label=below left:{$e:\epsilon, 3$}](e) at (2,-2) {};
\node[circle,draw=black, minimum size=4pt,inner sep=0pt, fill = black, label=above:{$s_2:\sigma_2$}](s2) at (6,-2) {};
\node[circle,draw=black, minimum size=4pt,inner sep=0pt, , fill = black, label=above:{$c:\gamma,\heartsuit$}](c) at (6,0) {};
\node[circle,draw=black, minimum size=4pt,inner sep=0pt, , fill = black, label=above right:{$b:\beta,4$}](b) at (4,0) {};

\draw[thick] (s1) to (d);
\draw[thick] (s1) to (a);
\draw[thick] (d) to (e);
\draw[thick] (e) to (a);

\draw[thick] (s2) to (f);
\draw[thick] (s2) to (c);
\draw[thick] (b) to (c);
\draw[thick] (f) to (e);

\draw[thick] (a) to (b);
\draw[thick] (f) to (b);
\end{tikzpicture}
}
\scalebox{0.8}{
\begin{tikzpicture}[rotate = 90]
\node[circle,draw=black, minimum size=4pt,inner sep=0pt, fill = black, label=below:{$s_1:\sigma_1$}](s1) at (0,0) {};
\node[circle,draw=black, minimum size=4pt,inner sep=0pt,, fill = black , label=below:{$d:\delta$}](d) at (0,-2) {};
\node[circle,draw=black, minimum size=4pt,inner sep=0pt, , fill = black, label=above right:{$a:\alpha$}](a) at (2,0) {};
\node[circle,draw=black, minimum size=4pt,inner sep=0pt, , fill = black, label=below left:{$f:\zeta$}](f) at (4,-2) {};
\node[circle,draw=black, minimum size=4pt,inner sep=0pt, , fill = black, label=below left:{$e:\epsilon, 3, \heartsuit$}](e) at (2,-2) {};
\node[circle,draw=black, minimum size=4pt,inner sep=0pt, fill = black, label=above:{$s_2:\sigma_2$}](s2) at (6,-2) {};
\node[circle,draw=black, minimum size=4pt,inner sep=0pt, , fill = black, label=above:{$c:\gamma$}](c) at (6,0) {};
\node[circle,draw=black, minimum size=4pt,inner sep=0pt, , fill = black, label=above right:{$b:\beta,4, \heartsuit$}](b) at (4,0) {};

\draw[thick] (s1) to (d);
\draw[thick] (s1) to (a);
\draw[thick] (d) to (e);
\draw[thick] (e) to (a);

\draw[thick] (s2) to (f);
\draw[thick] (s2) to (c);
\draw[thick] (b) to (c);
\draw[thick] (f) to (e);

\draw[thick] (a) to (b);
\draw[thick] (f) to (b);
\draw[thick,dashed] (s2) to (b);
\draw[thick,dashed] (e) to (s1);

\end{tikzpicture}
}
\scalebox{0.8}{
  \begin{tikzpicture}[rotate = 90]
\node[circle,draw=black, minimum size=4pt,inner sep=0pt, fill = black, label=below:{$s_1:\sigma_1$}](s1) at (0,0) {};
\node[circle,draw=black, minimum size=4pt,inner sep=0pt,, fill = black , label=below:{$d:\delta$}](d) at (0,-2) {};
\node[circle,draw=black, minimum size=4pt,inner sep=0pt, , fill = black, label=above right:{$a:\alpha$}](a) at (2,0) {};
\node[circle,draw=black, minimum size=4pt,inner sep=0pt, , fill = black, label=below left:{$f:\zeta$}](f) at (4,-2) {};
\node[circle,draw=black, minimum size=4pt,inner sep=0pt, , fill = black, label=below left:{$e:\epsilon, 3$}](e) at (2,-2) {};
\node[circle,draw=black, minimum size=4pt,inner sep=0pt, fill = black, label=above:{$s_2:\sigma_2$}](s2) at (6,-2) {};
\node[circle,draw=black, minimum size=4pt,inner sep=0pt, , fill = black, label=above:{$c:\gamma, \heartsuit$}](c) at (6,0) {};
\node[circle,draw=black, minimum size=4pt,inner sep=0pt, , fill = black, label=above right:{$b:\beta,4, \heartsuit$}](b) at (4,0) {};

\draw[thick] (s1) to (d);
\draw[thick] (s1) to (a);
\draw[thick] (d) to (e);
\draw[thick] (e) to (a);

\draw[thick] (s2) to (f);
\draw[thick] (s2) to (c);
\draw[thick] (b) to (c);
\draw[thick] (f) to (e);

\draw[thick] (a) to (b);
\draw[thick] (f) to (b);
\draw[thick,dashed, bend right] (s1) to (b);
\draw[thick,dashed] (e) to (s1);

\end{tikzpicture}
   }
\caption{Mechanism $M$ and its updates. In the mechanisms, there are two sellers, $s_1$ and $s_2$, and six buyers, $a,b,...,f$. Agents' names are shown near the state label. New links are dashed. The allocation of the sold items is depicted by $\heartsuit$.}
\label{fig:example2}
\end{figure} 

We can see that in mechanism $M$, each seller has utility of 2, and buyers $a$ and $c$ get the item (recall that we assume the lexicographic tie-breaking rule). Formally, $ut_{\sigma_1} = 2 \land ut_{\sigma_2} = 2 \land \heartsuit \alpha \land \heartsuit \gamma$. 
We can also consider a more cooperative goal of the mechanism configuration, where each agent either has the item or has a friend who has the item, expressed by $\bigwedge_{i \in \mathsf{Nom}_M} i \to (\heartsuit \bullet \lor \, \Diamond \heartsuit \bullet)$, where $\mathsf{Nom}_M = \{\sigma_1, \sigma_2, \alpha, ..., \zeta\}$. Mechanism $M$ does not satisfy this goal, as, for example, the formula does not hold for agent $d$.

Now, assume that both sellers decide to diffuse the information about their auctions over the network. Seller $s_1$ incentivises buyer $d$ to invite all her friends, i.e., $e$, to the auction; and similarly for seller $s_2$ and buyer $c$ who invites $b$. The resulting update $M^{\sigma_1:\delta, \sigma_2: \gamma}$ is depicted in the middle of Figure \ref{fig:example2}. After this diffusion, the utility of seller $s_1$ becomes 3 (1 is spent on the incentive, and agent $e$ bids 3), and the utility of seller $s_2$ is 4, i.e., for both sellers, the diffusion allowed them to increase their utilities by reaching bidders with higher valuations. Formally, we can write this as $ut_{\sigma_1} = 2 \land ut_{\sigma_2} = 2 \land [\sigma_1:\delta, \sigma_2:\gamma] (ut_{\sigma_1} > 2 \land ut_{\sigma_2} > 2)$.

Interestingly, such a diffusion performed by both sellers allows us to reach the cooperative goal: formula $[\sigma_1:\delta, \sigma_2:\gamma]\bigwedge_{i \in \mathsf{Nom}_M} i \to (\heartsuit \bullet \lor \,\Diamond \heartsuit \bullet)$ is valid in mechanism $M$.

Although the diffusion operator $[\sigma_1:\delta, \sigma_2:\gamma]$ allows sellers to increase their utilities and also achieve the cooperative goal, it requires a coordinated action from the sellers. Thus, seller $s_1$ can increase her utility even more and outperforms seller $s_2$ by incentivising buyer $a$ instead of $d$, and the resulting updated mechanism $M^{\sigma_1:\alpha, \sigma_2: \gamma}$ is depicted on the right of Figure \ref{fig:example2}. In this case, $a$ will invite her friends, including $b$, to join the auction of seller $s_1$. Observe that 
there is nothing seller $s_2$ can do to make her utility greater than 1 and thus outperform $s_1$ (recall that as a tie-breaking rule we use the lexicographic ordering, and hence, if buyer $b$ is being invited to both auctions, she will choose seller $s_1$). 
In formulas, $\bigwedge_{i \in \mathsf{Nom}_M \cup \{\bullet\}}[\sigma_1 : \alpha, \sigma_2: i] (ut_{\sigma_1} > ut_{\sigma_2})$. The cooperative goal $\bigwedge_{i \in \mathsf{Nom}_M} i \to (\heartsuit \bullet \lor \, \Diamond \heartsuit \bullet)$ is no longer valid in the mechanism as agent $d$ does not satisfy it.


\end{example}

\section{Strategic Properties of Diffusion Auctions}
Let us now demonstrate how to express Nash equilibrium in our setting, as well as study the complexities of the model checking and strategy existence problem. 

We first define an appropriate notion of a \textit{finite} DAM. The nuance here is that even if the set of agents is finite, we have a countably infinite number of names that can be assigned to agents. Luckily for us, whenever a problem requires both a mechanism and a formula in the input, we need to care about only those nominals that occur explicitly in the formula. 

Let $name (\varphi) $ be the finite set of nominals appearing in $\varphi$. 
Given a DAM $M = ((Agt, F, Bdg, V, I, N), P, \Pay, U)$, we define 
function $\Nfin: Agt \to \mathsf{Nom}$ that will pick for each agent a \textit{finite number} of names. Since $N$ is surjective, for each buyer $b$, the preimage $N^{-1} (b)$ is non-empty (and potentially infinite). Hence, for each buyer $b$, $\Nfin(b)$ picks all $\beta$ s.t. $\beta \in name(\varphi)$ and $N(\beta) = b$, and any single element of $N^{-1} (b)$ otherwise. For such a function, we have that $N(\Nfin (b)) = b$, and by construction, $\Nfin$ is finite. We will denote the mechanism $M$ with $N$ substituted for $\Nfin$ as $\Mfin$.
Intuitively, $\Nfin$ is a finite inverse of $N$ that picks for each agent a single name out of possibly an infinite number of names of the agent while also respecting the nominals appearing 
in $\varphi$.

It is straightforward to show the following by induction on $\varphi$ and using the definition of the semantics.
\begin{proposition}
Let $\varphi \in \mathcal{L}^n$ and a mechanism $M$ be given. Then we have that $M,a \models \varphi$ iff $\Mfin,a \models \varphi$.
\end{proposition}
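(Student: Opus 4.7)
The plan is to proceed by straightforward structural induction on $\varphi$. The key observation is that $M$ and $\Mfin$ share the same set of agents, the same friendship relation $F$, the same budget function $Bdg$, the same valuation $V$, the same incentive function $I$, and the same allocation, payment, and utility functions $P$, $\Pay$, $U$. They differ only in the naming function: $N$ in the former, $\Nfin$ (or rather the inverse relation it induces) in the latter. By construction, $\Nfin$ agrees with $N$ on every nominal in $name(\varphi)$, and hence on every nominal appearing in any subformula of $\varphi$.

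For the atomic cases ($\alpha$, $\heartsuit\gamma$, and the linear inequality $\sum_{i=1}^m z_i t_i \geqslant z$), the truth value is determined by $P$, $U$, and by $N$ evaluated on nominals that necessarily lie in $name(\varphi)$ (or by the current agent $a$ itself, in the case of $\bullet$, which is interpreted identically on both sides). Since $P$ and $U$ are unchanged and $N$, $\Nfin$ coincide on $name(\varphi)$, these base cases are immediate. The Boolean cases and the $\square$ case are direct applications of the induction hypothesis: neither Booleans nor $\square$ consult the naming function, and $F$ is the same in $M$ and $\Mfin$.

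The only case requiring a bit of care is the update modality $[\overline{\sigma}:\overline{\beta}]\varphi$. I would first establish the commutation $(\Mfin)^{\overline{\sigma}:\overline{\beta}} = (M^{\overline{\sigma}:\overline{\beta}})_{\mathit{fin}}$: an update modifies only $F$ and $Bdg$ via the values $N(\sigma_i)$, $N(\beta_i)$, $I$, and $Bdg$, all of which are shared between $M$ and $\Mfin$ because $\sigma_i, \beta_i \in name(\varphi)$ and the other components are identical; moreover, the set of agents is invariant under updates, so the same $\Nfin$ continues to witness the correspondence. The modality's preconditions ($N(\beta_i) \in F(N(\sigma_i))$ and $Bdg(N(\sigma_i)) \geqslant I(N(\beta_i), N(\sigma_i))$) therefore evaluate identically in both models: either both fail, and both sides hold vacuously, or both succeed, and we apply the induction hypothesis to $\varphi$ in the updated mechanism.

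There is no real obstacle here; the only subtlety is ensuring that the same $\Nfin$ (defined once from the top-level $\varphi$) remains adequate for every subformula encountered during the induction and after arbitrarily many updates. This is guaranteed by the monotonicity $name(\psi) \subseteq name(\varphi)$ for subformulas $\psi$, together with the invariance of $Agt$ under updates.
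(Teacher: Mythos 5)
Your proof is correct and takes essentially the same approach as the paper, which simply asserts the result is ``straightforward to show by induction on $\varphi$ and using the definition of the semantics.'' Your elaboration --- in particular the commutation of update and finitization, and the observation that $name(\psi) \subseteq name(\varphi)$ keeps the single top-level $\Nfin$ adequate throughout --- fills in exactly the details the paper leaves implicit.
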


The \emph{size} of mechanism $M$ 
is $|M| = |Agt| + |F| + |Bdg| + |V| + |I| + |\Nfin| + |P| + |\Pay| + |U|$.
Mechanism $M$ \emph{finite}, if $|M|$ is finite. Since all mechanisms in this section are finite, we will 
use interchangeably $M$ and $\Mfin$, and $N$ and $\Nfin$.

\paragraph{Nash Equilibrium} 
We can now verify that a given joint diffusion action is a one-step Nash equilibrium (NE) over a given finite mechanism $M$. This is significant, as it allows reasoning about optimal strategies of sellers. 
Let $\varphi := \langle \overline{\sigma}: \overline{\beta}\rangle \bigwedge_{1\leqslant i \leqslant n} ut_{\sigma_i} = m_i$ express that after a joint diffusion action $\langle \overline{\sigma}: \overline{\beta}\rangle$, utilities of all sellers $i$ are $m_i$. 

In order to verify that such a diffusion is indeed an NE, we check that no single seller can increase her utility by deviating, i.e., by incentivising some other buyer present in the same mechanism. 
The formula for the NE in this case is 
$\varphi \land \bigwedge_{i = 1}^n \bigwedge_{\gamma \in \mathsf{Nom}^{\bullet}_M} \langle\sigma_1: \beta_1, ..., \sigma_i:\gamma, ..., \sigma_n: \beta_n \rangle (ut_{\sigma_i} \leqslant m_i)$,
where $\mathsf{Nom}^{\bullet}_M$ is the set of names appearing in the finite mechanism, i.e., the range of $\Nfin$ plus $\bullet$.

We can further generalise the setting to a $k$-step NE, by taking $\varphi^k := \langle \overline{\sigma}: \overline{\alpha}\rangle^1 ... \langle \overline{\sigma}: \overline{\gamma}\rangle^k \bigwedge_{1\leqslant i \leqslant n} ut_{\sigma_i} = m_i$ meaning that after a $k$-sequence of joint diffusion actions, sellers' utilities are $m_i$'s. Then the corresponding formula for the $k$-step NE is $\varphi^k \land \bigwedge_{i = 1}^n \bigwedge_{\gamma \in \mathsf{Nom}^{\bullet}_M} \langle\sigma_1: \alpha_1, ..., \sigma_i:\gamma, ..., \alpha_n: \beta_n \rangle^1 .... \bigwedge_{i = 1}^n \bigwedge_{\gamma \in \mathsf{Nom}^{\bullet}_M} \langle\sigma_1: \beta_1, ..., \sigma_i:\gamma, ..., \sigma_n: \beta_n \rangle^k (ut_{\sigma_i} \leqslant m_i)$.

\subsection{Model Checking and Strategy Existence}
First, we show that the complexity of the model checking problem for $\mathcal{L}^n$ is in P.

\begin{theorem}
Model checking $\mathcal{L}^n$ is in P for the class of finite DAMs with polynomially computable placement, payment, and utility functions. 
\end{theorem}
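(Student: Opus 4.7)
The plan is to give a standard recursive (bottom-up) labelling algorithm $\mathrm{MC}(M,\varphi)$ that returns the set of agents $a$ in $M$ with $M,a\models\varphi$, and then argue that both the depth of recursion and the work done at each level are polynomially bounded in $|M|+|\varphi|$. The cases follow the inductive definition of the semantics: for nominals $\alpha$ return $\{N(\alpha)\}$ (one lookup in $N$); for $\heartsuit\gamma$ return either $\{a\in Agt\mid P(a)=1\}$ or the singleton $\{N(\gamma)\}\cap P^{-1}(1)$, both polynomial by the assumption that $P$ is computable in polynomial time; for a linear inequality $\sum z_it_i\geqslant z$ iterate over agents $a\in Agt$, evaluate $U$ at the relevant agents (polynomial by hypothesis on $U$), and test the inequality using rational arithmetic; Booleans reduce to set complement and intersection; for $\square\psi$ first compute $S_\psi = \mathrm{MC}(M,\psi)$ and then return $\{a\mid F(a)\subseteq S_\psi\}$, which costs $O(|Agt|\cdot|F|)$.

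The interesting case is $[\overline{\sigma}:\overline{\beta}]\psi$. Here I would first check, in time polynomial in $|M|$, whether the precondition holds: for every $\sigma_i\in\overline{\sigma}^{\setminus\bullet}$, whether $N(\beta_i)\in F(N(\sigma_i))$ and $Bdg(N(\sigma_i))\geqslant I(N(\beta_i),N(\sigma_i))$. If the precondition fails, the formula is vacuously true and I return $Agt$; if it holds, I explicitly build the updated mechanism $M^{\overline{\sigma}:\overline{\beta}}$ (determining for each $\beta_i$ the lexicographic argmax over sellers, then adding at most $|B|$ edges to each $F(N(\sigma_i))$ and adjusting the two affected budgets), and recurse with $\mathrm{MC}(M^{\overline{\sigma}:\overline{\beta}},\psi)$. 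Crucially, the update is deterministic: the $[\overline{\sigma}:\overline{\beta}]$ modality carries concrete buyers, so there is no branching and exactly one recursive call is spawned per dynamic subformula.

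For the polynomial bound I would argue two things. First, the number of recursive calls is at most $|\varphi|$, because each subformula triggers at most one call (Booleans, $\square$, $\heartsuit$, inequalities, and the dynamic operator all recurse on strictly smaller subformulas exactly once). Second, along any branch of the recursion the current mechanism grows only polynomially: each update preserves $Agt$, $V$, $I$, $N$, $P$, $\Pay$, $U$, modifies only the friendship relation and two budget entries, and the friendship relation is bounded by $|Agt|^2$. Since rationals in $Bdg$ are only added to or subtracted from rationals already occurring in $I$ (finitely many), their bit-length grows by at most $O(|\varphi|)$ per branch. Hence every recursive call operates on a mechanism of size polynomial in $|M|+|\varphi|$, and the total work is polynomial. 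Putting the bounds together yields a polynomial-time algorithm.

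\textbf{Main obstacle.} The only subtle point is controlling the growth of the mechanism under iterated updates: I have to check that the added edges and the budget rewrites do not cascade into super-polynomial size across nested dynamic modalities. This reduces to observing that $Agt$ is fixed, that each update adds at most $|B|$ new edges (bounded by the size of $F(N(\beta_i))$), and that the budget bit-lengths stay polynomial because all arithmetic is rational addition/subtraction with operands drawn from $Bdg$ and $I$; since the algorithm performs at most $|\varphi|$ such updates along a branch and $P$, $\Pay$, $U$ remain polynomially computable on the updated mechanism by assumption, the overall complexity stays in P.
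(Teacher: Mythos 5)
Your proposal is correct and takes essentially the same approach as the paper: a recursive procedure that mirrors the semantics, deterministically constructs the updated mechanism for each dynamic modality (checking the precondition and the argmax first), and bounds the cost by observing that at most $|\varphi|$ recursive calls are made, each on a mechanism whose size stays polynomial since $Agt$ is fixed and the friendship relation is bounded by $|Agt|^2$. Your additional remark on the bit-length of budgets under iterated updates is a detail the paper leaves implicit, but it does not change the argument.
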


\begin{proof}
In Algorithm \ref{diffMC}, we focus on the dynamic modality and the allocation operator\footnote{Modal, nominal \cite{franceschet06}, and arithmetic cases \cite{algorithms} are standard and can be computed in polynomial time. We omit them for brevity.}. 
\begin{breakablealgorithm}
	\caption{An algorithm for model checking $\mathcal{L}^n$}\label{diffMC} 
	\small
	\begin{algorithmic}[1] 		
		\Procedure{MC}{$M, a, \varphi$}		
\Case {$\varphi = [\overline{\sigma}:\overline{\beta}] \psi$} 
\For{$\sigma_i \in \overline{\sigma}^{\setminus \bullet}$}
\If {$N(\beta_i) \in F(N(\sigma_i))
$ and $ Bdg(N(\sigma_i)) \geqslant   $  
\Statex  \hspace{\algorithmicindent}\hspace{\algorithmicindent}\quad\;\;
$I(N(\beta_i), N(\sigma_i))$
 }
\If{$N({\sigma_i}) = \arg\,\max_{N({\sigma_i}) \in S} I (N({\beta_i}),N({\sigma_i})))$ 
\Statex  \hspace{\algorithmicindent}\hspace{\algorithmicindent}\qquad\;\; 
}
\State{
$\begin{aligned}
F^{\overline{\sigma}:\overline{\beta}} (N(\sigma_i)) = F(N(\sigma_i)) \cup \{a \mid 
a \in B \\ 
\text{ and } a \in F(N(\beta_i))\}
\end{aligned}$
} 
\State{
$\begin{aligned}
Bdg^{\overline{\sigma}:\overline{\beta}} (N(\sigma_i)) = \; & Bdg(N(\sigma_i)) \;- 
\\
& I (N(\beta_i), N(\sigma_i))
\end{aligned}$
}
\State{$
\begin{aligned}
    Bdg^{\overline{\sigma}:\overline{\beta}} (N(\beta_i))) = \; & Bdg(N(\beta_i)))\; + 
    \\
    & I (N(\beta_i),N(\sigma_i)))
\end{aligned} 
$}
\EndIf
\Else
\State{\textbf{return} $true$}
\EndIf
\EndFor
\State{\textbf{return} $\textsc{MC} (M^{\overline{\sigma}:\overline{\beta}}, a, \psi)$}
\EndCase

\Case{$\varphi = \heartsuit \gamma$}
\If{$\gamma \neq \bullet$}
\State{\textbf{return} $P(N(\gamma)) = 1$}
\Else
\State{\textbf{return} $P(a) = 1$}
\EndIf
\EndCase
   \EndProcedure

	\end{algorithmic}
\end{breakablealgorithm}

On line 5, we check that seller $N(\sigma_i)$ offers the highest incentive to the corresponding buyer. If there are two sellers that offer the highest incentive to the same corresponding buyer, we assume that $\arg\,\max_{N(\sigma_i) \in S} I (N(\beta_i),N(\sigma_i)))$ returns the seller that appears earlier in the lexicographic order.

The algorithm directly mimics 
the semantics of $\mathcal{L}^n$ and thus correctness can be shown by induction on $\varphi$. 
First, recall that here we assume that the allocation, payment, and utility functions are all computable in polynomial time. For the case of $\varphi = [\overline{\sigma}:\overline{\beta}] \psi$, the size of an updated mechanism $M^{\overline{\sigma}:\overline{\beta}}$ is at most $\mathcal{O}(|M|^2)$ (in the worst case, the friendship relation is universal). The procedure \textsc{MC}$(M, a, \varphi)$ is run for at most $|\varphi|$ times and for at most $|\varphi|$ mechanisms. Hence, \textsc{MC}$(M, a, \varphi)$ is used for a polynomial amount of time.
\end{proof}

Having the model checking result at hand, we can formulate and show the complexity of the strategy existence problem (proof is given in the Technical Appendix). The problem intuitively consists in checking whether, for a given mechanism and a mutual sellers' goal, 
there is a way (a strategy) for all sellers to achieve $\varphi$ in a finite number of steps. 

\begin{definition}
    Given a finite mechanism $M$ and a goal $\varphi \in \mathcal{L}^n$, the \emph{strategy existence problem} consists in determining whether there is a finite sequence of concurrent incentivisations $\langle \overline{\sigma}: \overline{\beta} \rangle^\ast = \langle \overline{\sigma}: \overline{\alpha} \rangle .... \langle \overline{\sigma}: \overline{\gamma} \rangle$ such that $M,s \models \langle \overline{\sigma}: \overline{\beta} \rangle^\ast \varphi$ for sellers $s \in S$.   
\end{definition}
\begin{restatable}{theorem}{strex}
\label{thm:strex}
    The strategy existence problem is NP-complete for the class of finite DAMs with polynomially computable placement, payment, and utility functions. 
\end{restatable}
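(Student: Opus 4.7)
The plan is to establish both NP membership and NP-hardness for the strategy existence problem.

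For NP membership, I would argue that whenever a winning sequence exists, one of length polynomial in $|M|+|\varphi|$ also exists, so that a non-deterministic algorithm can guess it and verify it using the polynomial-time model-checking algorithm of the previous theorem. The length bound rests on two ingredients: (i) the friendship relation is monotonically non-decreasing under diffusion and bounded by $|Agt|^2$ edges, so at most polynomially many steps can strictly extend it; (ii) between friendship-growing steps only budgets evolve, and the truth of $\varphi$ depends only on budgets crossing the polynomially many threshold constants appearing in its linear-inequality atoms. A normal-form argument on a length-minimal winning strategy then eliminates any step that neither extends a friendship set nor moves some budget across a $\varphi$-relevant threshold: removal only raises the acting seller's remaining budget, preserving later steps' preconditions, and leaves the final truth value of $\varphi$ unchanged since $\varphi$ is threshold-sensitive only. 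This yields an overall length bound polynomial in $|M|+|\varphi|$.

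For NP-hardness, I would reduce from $3$-$\textsc{SAT}$. Given $\psi = \bigwedge_{j=1}^{k} C_j$ over variables $x_1,\ldots,x_m$, construct a DAM with $m$ variable-sellers $s_1,\ldots,s_m$, a reference seller $s^*$, buyer gadgets $b_i^T, b_i^F$ for each variable, and a clause-witness buyer $d_j$ for each clause. The initial friendship places $d_j$ as a friend of $s^*$ and as a friend of $b_i^T$ (resp.\ $b_i^F$) precisely when literal $x_i$ (resp.\ $\lnot x_i$) occurs in $C_j$, while each $s_i$ is initially connected to $b_i^T$ and $b_i^F$. Budgets and incentives are chosen so that every $s_i$ can afford to incentivise exactly one of $b_i^T, b_i^F$, forcing a binary choice. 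After one round of diffusion, by symmetry of the friendship relation, $s_i$ appears in $F(d_j)$ iff the chosen literal satisfies $C_j$. The goal formula $\varphi := \bigwedge_{j=1}^{k} \Diamond (d_j \land \bigvee_{i=1}^{m} \Diamond s_i)$, evaluated at $s^*$, then holds after the diffusion iff the sellers' joint choice encodes a satisfying assignment of $\psi$. Since no seller has budget for a second round, multi-step strategies give no additional power, so strategy existence reduces exactly to satisfiability of $\psi$.

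The main obstacle I expect is rigorously establishing the polynomial length bound for NP membership. The subtlety is that removing a budget-only step both raises the acting seller's budget (helpful for later preconditions) and lowers the corresponding buyer's budget, which can appear inside utility atoms of $\varphi$ and thus affect its truth. One must therefore maintain an invariant ensuring that no relevant threshold (either a budget precondition $Bdg(N(\sigma)) \geqslant I(N(\beta),N(\sigma))$ for a later step, or an inequality atom from $\varphi$) is crossed by the removal, and that the polynomial-time-computable placement, payment, and utility functions remain consistent under this transformation. A careful case analysis on the finitely many incentive values and $\varphi$-thresholds should complete the compression argument and close the bound.
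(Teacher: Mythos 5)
Your overall strategy (guess-and-check for membership, a 3-SAT reduction for hardness) matches the paper's, but your hardness reduction is genuinely different. The paper uses a \emph{single} seller who performs a multi-step sequence of incentivisations down a tree-shaped network (clause agents, then literal agents, then atom agents, then true/false leaves), with all incentives set to $0$ and a goal formula built from nested $\Diamond$'s checking that each clause has a reachable and consistently assigned true literal. You instead use one seller per variable plus a reference seller and a single concurrent round, encoding the assignment in which of $b_i^T, b_i^F$ each seller incentivises. Your version is conceptually cleaner, but it only establishes hardness when the number of sellers grows with the instance, whereas the paper's reduction shows the problem is already NP-hard for $1$-DAMs (and hence for every fixed $n$). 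Two small repairs your construction needs: the goal must be guarded by the nominal of $s^*$, since the problem requires $M,s\models\langle\overline{\sigma}:\overline{\beta}\rangle^\ast\varphi$ for the sellers $s\in S$ and your unguarded formula fails at the variable-sellers; and you should set $I(b_i^T,s')=0$ for $s'\neq s_i$ so that $s_i$ is the arg-max seller required by the update rule.

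On membership, your high-level route is the paper's (polynomial witness length plus the P-time model checker), but your compression argument does not close, and the obstacle you yourself flag is real: a step transferring a small incentive to a buyer may cross no threshold of $\varphi$ by itself yet be indispensable, because many such transfers (budgets and incentives are rationals given in binary) may be needed before an atom such as $ut_\beta\geqslant K$ flips; deleting any one of them changes the final budgets and hence possibly the truth of $\varphi$, so ``threshold-sensitivity'' alone does not license removal. To be fair, the paper's own justification of the length bound is no more rigorous --- it simply asserts that the number of incentivisations is bounded by the sellers' budgets, which are part of the input. A clean fix for either argument is to encode maximal blocks of repeated identical concurrent actions by a binary repetition counter and verify their aggregate effect arithmetically, keeping the certificate polynomial.
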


\section{Reasoning About Sellers' Strategies}
While considering the strategy existence problem, we looked at how \textit{all} sellers can reach their joint goal via a sequence of concurrent incentivisation actions. However, in diffusion auctions with multiple sellers selling the copy of the same item, sellers and \textit{coalitions} thereof may compete against each other for buyers. To capture this strategic competitive setting,  
we introduce a modality inspired by coalitional operators from 
CL \cite{pauly02} and 
ATL \cite{alur2002}. In particular, we extend the language of $\mathcal{L}^n$ with $\langle \! [ \mathsf{C} ] \! \rangle \varphi$ for $\mathsf{C} \subseteq \mathsf{S}$, meaning that there is a (one-step) strategy for the coalition of sellers $\mathsf{C}$ to incentivise buyers such that no matter what other sellers do, $\varphi$ holds. In other words, modalities $\langle \! [ \mathsf{C} ] \! \rangle$ capture the ability of sellers in $\mathsf{C}$ to reach 
outcome $\varphi$ in the competitive setting.

\begin{definition}
The language $\mathcal{SL}^n$ of the \emph{$n$-seller strategic logic for diffusion incentives} 
is defined as follows:
    \begin{align*}
    &\varphi:= &&\alpha\mid 
    (z_1 t_1 + ...+z_m t_m) \geqslant z  \mid \lnot \varphi  \mid (\varphi \land \varphi) 
    \mid \square\varphi \mid \\
    & &&[\sigma_1: \beta_1, ..., \sigma_n: \beta_n] \varphi \mid [ \! \langle \mathsf{C} \rangle \! ] \varphi \mid  \heartsuit\gamma,
\end{align*}
where $\mathsf{C} \subseteq \mathsf{S}$, and the dual of $[ \! \langle \mathsf{C} \rangle \! ] \varphi$ is $\langle \! [\mathsf{C}] \! \rangle \varphi:= \lnot [ \! \langle \mathsf{C} \rangle \! ] \lnot \varphi$. We will denote by $\overline{\sigma_{\mathsf{C}}}: \overline{\beta_{\mathsf{C}}}$ the assignment of buyers names $\overline{\beta_{\mathsf{C}}}$ from $\mathsf{B}^\bullet$ to the coalition of sellers $\overline{\sigma_{\mathsf{C}}}$ such that all sellers not in coalition are assigned $\bullet$ (i.e., they skip their turn). Moreover, we will denote ${\overline{\sigma_\mathsf{C}}\cup \overline{\sigma_{\mathsf{S \setminus C}}}:\overline{\beta_\mathsf{C}}\cup \overline{\beta_{\mathsf{S \setminus C}}}}$ the full assignment of $n$ buyers from $\mathsf{B}^\bullet$ to $n$ sellers.
\end{definition}

\begin{definition}
Let $M = ((Agt, F, Bdg,$ $V, I, N),$ $P,$ $\Pay,$ $U)$ be an $n$-DAM. The semantics of $\mathcal{SL}^n$ extends the semantics of $\mathcal{L}^n$ with the following clause and its dual:
    \begin{alignat*}{3}
        &M,a \models \langle \! [\mathsf{C}] \! \rangle \varphi \text{ iff } \exists \overline{\beta_\mathsf{C}}\forall \overline{\beta_{\mathsf{S \setminus C}}}: 
        M,a \models \langle \overline{\sigma_{\mathsf{C}}}: \overline{\beta_{\mathsf{C}}} \rangle \top \text{ and }\\ 
        & \qquad \quad 
        M,a \models [{\overline{\sigma_\mathsf{C}}\cup \overline{\sigma_{\mathsf{S \setminus C}}}:\overline{\beta_\mathsf{C}}\cup \overline{\beta_{\mathsf{S \setminus C}}}} ] \varphi\\
        &M,a \models [ \! \langle\mathsf{C}\rangle \! ] \varphi \text{ iff } \forall \overline{\beta_\mathsf{C}}\exists \overline{\beta_{\mathsf{S \setminus C}}}: 
        M,a \models \langle \overline{\sigma_{\mathsf{C}}}: \overline{\beta_{\mathsf{C}}} \rangle \top \text{ implies }\\ 
        & \qquad \quad M,a \models \langle {\overline{\sigma_\mathsf{C}}\cup \overline{\sigma_{\mathsf{S \setminus C}}}:\overline{\beta_\mathsf{C}}\cup \overline{\beta_{\mathsf{S \setminus C}}}} \rangle \varphi
\end{alignat*}     
\end{definition}

Intuitively, $\langle \! [\mathsf{C}] \! \rangle \varphi$ holds if and only if there is a choice of buyers for the coalition of sellers $\mathsf{C}$ such that this choice is possible (part $\langle \overline{\sigma_{\mathsf{C}}}: \overline{\beta_{\mathsf{C}}} \rangle \top$) and whatever buyers the rest of the sellers decides to incentivise, $\varphi$ holds after the resulting joint concurrent action. 

It is immediate 
that the following formulas are valid\footnote{These formulas are some of the validites of CL \cite{pauly02}.}:
\begin{itemize}
    \item $\langle \! [\mathsf{C}] \! \rangle \varphi \to \langle \! [\mathsf{C} \cup \mathsf{D}] \! \rangle \varphi$, i.e., a superset of a coalition is at least as powerful as the coalition.
    \item $[ \! \langle\mathsf{\emptyset}\rangle \! ] \varphi \to \langle \! [\mathsf{S}] \! \rangle \varphi$, i.e., the relationship between the empty and the grand coalitions.
    \item $\langle \! [\mathsf{C}] \! \rangle (\varphi \land \psi) \to \langle \! [\mathsf{C}] \! \rangle \varphi$, i.e., the ability to achieve two goals implies the ability to achieve any single one of them.
\end{itemize}

\begin{example}
    The ability to reason about strategies of coalitions of sellers allows us to consider truly competitive and cooperative scenarios. For the first one, consider $\langle \! [\sigma_1,\sigma_2] \! \rangle [ \! \langle\sigma_3\rangle \! ] (ut_{\sigma_1} > ut_{\sigma_3})$ meaning that a coalition of the first two sellers 
    can preclude the third seller from having a utility equal to or higher than that of the first seller in a two-step incentivisation scenario. In a more altruistic setting, consider $\lnot \langle \! [\sigma_1,\sigma_2] \! \rangle (ut_{\sigma_1} + ut_{\sigma_2}>3) \land   \langle \! [\sigma_1,\sigma_2,\sigma_3] \! \rangle (ut_{\sigma_1} + ut_{\sigma_2}>3)$ meaning that together, the first and the second sellers cannot achieve a joint utility higher than 3, but if they cooperate with the third seller, this goal is satisfied.
\end{example}

\subsection{Expressivity and Model Checking}

Having introduced a strategic extension of $\mathcal{L}^n$, it is quite natural to wonder whether we gain anything in terms of expressivity, and, if yes, whether it comes at a price. We show that the answer to both questions is \textit{yes} (with a caveat). 

\begin{theorem}
\label{thm:truthpreserve}
Let $M,a$ be a finite $n$-DAM and $\varphi \in \mathcal{SL}^n$. Then there exists a $\psi \in \mathcal{L}^n$ s.t. $M,a \models \varphi$ iff $M,a \models \psi$.
\end{theorem}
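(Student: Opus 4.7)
The plan is to prove the theorem by structural induction on $\varphi$, constructing a \emph{model-dependent} translation $T_M : \mathcal{SL}^n \to \mathcal{L}^n$. The core observation is that finiteness of $M$ gives a finite set $\mathsf{Nom}^\bullet_M$ of names (the range of $\Nfin$ together with $\bullet$), and that every update $M \mapsto M^{\overline{\sigma}:\overline{\beta}}$ leaves both $Agt$ and $N$ unchanged. Hence the same finite vocabulary of names is valid at every mechanism reachable from $M$, which lets us replace the second-order quantifications hidden in the semantics of $\langle\![\mathsf{C}]\!\rangle$ by finitary boolean connectives indexed by $\mathsf{Nom}^\bullet_M$. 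This is what gives the ``with a caveat'': the translation is defined relative to the given pointed model $M,a$ rather than uniformly.

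The translation is defined inductively. On $\alpha$, linear inequalities, and $\heartsuit\gamma$ it is the identity; on Booleans, $\square$, and the dynamic operator it commutes with the connective (e.g.\ $T_M([\overline{\sigma}:\overline{\beta}]\psi) = [\overline{\sigma}:\overline{\beta}]\,T_M(\psi)$). The only interesting clause is
\[
T_M(\langle\![\mathsf{C}]\!\rangle \psi) \;=\; \bigvee_{\overline{\beta_{\mathsf{C}}}} \Bigl( \langle \overline{\sigma_{\mathsf{C}}}:\overline{\beta_{\mathsf{C}}} \rangle \top \;\land\; \bigwedge_{\overline{\beta_{\mathsf{S\setminus C}}}} [\,\overline{\sigma_{\mathsf{C}}} \cup \overline{\sigma_{\mathsf{S\setminus C}}} : \overline{\beta_{\mathsf{C}}} \cup \overline{\beta_{\mathsf{S\setminus C}}}\,]\, T_M(\psi) \Bigr),
\]
where the disjunction ranges over assignments $\overline{\beta_{\mathsf{C}}} \in (\mathsf{Nom}^\bullet_M)^{|\mathsf{C}|}$ and the conjunction over $\overline{\beta_{\mathsf{S\setminus C}}} \in (\mathsf{Nom}^\bullet_M)^{|\mathsf{S\setminus C}|}$. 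Correctness of this clause follows because the semantics of both $\langle \overline{\sigma}:\overline{\beta}\rangle \top$ and $[\overline{\sigma}:\overline{\beta}]\chi$ depends on $\overline{\beta}$ only through the underlying agents $N(\overline{\beta})$, and every agent of $M$ is named by at least one element of $\mathsf{Nom}^\bullet_M$ by construction of $\Nfin$; so restricting the two semantic quantifiers from the infinite $\mathsf{B}^\bullet$ down to the finite $\mathsf{Nom}^\bullet_M$ loses nothing.

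The main obstacle is the interaction between the dynamic operator and the induction hypothesis: when we apply IH to the subformula of $[\overline{\sigma}:\overline{\beta}]\psi$, that subformula is really evaluated at $M^{\overline{\sigma}:\overline{\beta}}$ rather than at $M$ itself, so a naive induction does not close. I would handle this by strengthening the statement being proved: for every finite DAM $M'$ obtained from $M$ by any finite sequence of updates, and every agent $a$ in $M'$, $M',a \models \varphi$ iff $M',a \models T_M(\varphi)$. This strengthening is legitimate precisely because updates preserve $Agt$ and $N$, and hence $\mathsf{Nom}^\bullet_{M'} = \mathsf{Nom}^\bullet_M$ throughout the reachable family, so the single $M$-indexed translation continues to serve. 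A side-observation worth recording is that the size of $T_M(\varphi)$ grows non-elementarily in the nesting depth of strategic modalities (each nesting contributes a factor of roughly $|\mathsf{Nom}^\bullet_M|^{|\mathsf{S}|}$), which is a further price paid for the added expressiveness of $\mathcal{SL}^n$.
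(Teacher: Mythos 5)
Your proposal is correct and takes essentially the same route as the paper: the same finite disjunction-over-conjunction unfolding of $\langle\![\mathsf{C}]\!\rangle\psi$, justified by the finiteness of the name vocabulary and the fact that the semantics of the dynamic operators depends on names only through $N$. Your strengthened induction hypothesis over all update-reachable mechanisms (legitimate since updates preserve $Agt$ and $N$) is a detail the paper leaves implicit and is worth recording; the only slips are that the index sets should be buyer names plus $\bullet$ (i.e.\ $\Nfin(B)\cup\{\bullet\}$) rather than all of $\mathsf{Nom}^\bullet_M$, since seller names would yield ill-formed assignments, and that the size blow-up is singly exponential in the nesting depth of strategic modalities, not non-elementary.
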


\begin{proof}
    To prove the theorem, we present a truth-preserving translation $t:\mathcal{SL}^n \to \mathcal{L}^n$. All cases, apart from the strategic one, are trivial as $\mathcal{L}^n \subset \mathcal{SL}^n$. For the strategic case, we have $t(\langle \! [\mathsf{C}] \! \rangle \varphi) = \bigvee_{\overline{\beta_\mathsf{C}} \in \Nfin(B)^{|\mathsf{C}|}} \bigwedge_{\overline{\beta_{\mathsf{S \setminus C}}} \in \Nfin(B)^{|\mathsf{S \setminus C}|}} (\langle \overline{\sigma_{\mathsf{C}}}: \overline{\beta_{\mathsf{C}}} \rangle \top$ $\land$ 
$[{\overline{\sigma_\mathsf{C}}\cup \overline{\sigma_{\mathsf{S \setminus C}}}:\overline{\beta_\mathsf{C}}\cup \overline{\beta_{\mathsf{S \setminus C}}}} ] t(\varphi))$. Note that this is a well-formed formula because we are dealing with a finite mechanism and hence we can explicitly go over the elements in $\Nfin(B)^{|\mathsf{C}|}$ one by one, where $\Nfin(B)^{|\mathsf{C}|}$ is the set of all tuples of agent names of size $|\mathsf{C}|$. It follows from the definition of the semantics that the translation is truth-preserving and terminating.
\end{proof}

While Theorem \ref{thm:truthpreserve} presents a translation that, for a given finite mechanism $M,a$ and a formula $\varphi \in \mathcal{SL}^n$, produces a corresponding formula $\psi \in \mathcal{L}^n$ that agrees with $\varphi$ on $M,a$, this result cannot be extended to arbitrary DAMs. In particular, the next theorem states that it is not the case that for a given $\varphi \in \mathcal{SL}^n$ we can always find a $\psi \in \mathcal{L}^n$ that will agree with $\varphi$ on \textit{all} mechanisms. In other words, $\mathcal{SL}^n$ is more expressive than $\mathcal{L}^n$. To show this, observe that modalities $\langle \! [ \mathsf{C} ] \! \rangle \varphi$ quantify over \textit{all} buyers' names, even those that are not explicitly present in the formula. A sketch of the proof of the next theorem is given in the Technical Appendix. 

 
 
\begin{restatable}{theorem}{slexp}
    $\mathcal{SL}^n$ is strictly more expressive than $\mathcal{L}^n$, i.e. $\mathcal{L}^n \subset \mathcal{SL}^n$ and there is a $\varphi \in \mathcal{SL}^n$ s.t. for all $\psi \in \mathcal{L}^n$ there is a mechanism $M,a$ such that $M,a \models \varphi$ iff $M,a \not \models \psi$. 
\end{restatable}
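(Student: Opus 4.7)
The inclusion $\mathcal{L}^n \subseteq \mathcal{SL}^n$ is immediate from syntactic containment and the identical semantics on the common constructs. For strictness, the plan is to exhibit a concrete strategic formula $\varphi \in \mathcal{SL}^n$ with no semantic equivalent in $\mathcal{L}^n$. A suitable candidate is $\varphi = \langle \! [\sigma_1] \! \rangle (ut_{\sigma_1} < k)$ for a carefully chosen rational $k$, asserting that seller $\sigma_1$ has some non-skip one-step incentivisation action dropping her utility strictly below $k$. Trivial candidates such as $\langle \! [\sigma_1] \! \rangle \top$ are unsuitable since the skip action $\sigma_1{:}\bullet$ always satisfies them. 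The key conceptual point is that $\langle \! [\sigma_1] \! \rangle$ existentially quantifies over the countably infinite set $\mathsf{B}^\bullet$ of buyer names, whereas any fixed $\psi \in \mathcal{L}^n$ mentions only finitely many nominals.

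Given any $\psi \in \mathcal{L}^n$, I would let $N = name(\psi)$ and construct two 1-seller DAMs $M_1, M_2$ sharing the same agents, friendship, budgets, valuations, and naming, as well as the same incentives for all buyers named by some nominal in $N$. The agents are the seller $s$ (with $N(\sigma_1)=s$), the buyers $b_1,\dots,b_m$ corresponding to the nominals in $N \cap \mathsf{B}$, and an additional buyer $w$ whose names all lie outside $N$ (which is possible since $\mathsf{B}$ is countably infinite and $N$ is finite). Tune the incentives so that $Bdg(s) < I(b_i,s)$ for every named buyer in both mechanisms, while $I(w,s) \leqslant Bdg(s)$ in $M_1$ and $I(w,s) > Bdg(s)$ in $M_2$; take $k$ to be the seller's base utility. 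In SMF with all valuations zero, $\varphi$ then holds in $M_1$ (witnessed by any nominal $\gamma$ outside $N$ with $N(\gamma)=w$) and fails in $M_2$, where every non-skip action is inadmissible.

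The final step is to show that $M_1, a \models \psi$ iff $M_2, a \models \psi$ at every shared agent, by induction on $\psi$. The crucial case is the dynamic modality $[\sigma_1{:}\beta]\psi'$ with $\beta \in N \cap \mathsf{B}^\bullet$: here $N(\beta) \neq w$, so both the precondition check and the resulting updated mechanism depend only on quantities agreeing between $M_1$ and $M_2$, and one applies the inductive hypothesis to the updated pair (which still differs only in $I(w,s)$). Nominal, modal, allocation, and arithmetic cases reduce immediately because all probed data coincide. The main obstacle is guaranteeing that the placement, payment, and utility functions yield identical values on $M_1, M_2$ and on all iterated updates reachable via nominals in $N$, so that $\heartsuit$ and utility comparisons agree; this is secured by working within a mechanism class such as SMF whose outputs are independent of $I(w,s)$ and by verifying that this independence is preserved along update sequences, so the induction on $\psi$ closes cleanly.
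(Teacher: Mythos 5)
Your proposal is correct and follows essentially the same route as the paper's proof: both hinge on hiding the distinguishing information (whether the seller can afford the incentive of one particular buyer) behind a buyer none of whose names occurs in the candidate $\mathcal{L}^n$ formula, so that the strategic modality---which quantifies over all of $\mathsf{B}^\bullet$---can detect it while no fixed $\mathcal{L}^n$ formula can; the paper's witness is $\langle\![\sigma]\!\rangle\Diamond\gamma$ with $\gamma$ reachable only through the unnamed buyer, whereas yours is a utility-drop formula, a cosmetic difference. The one point you should make explicit is that the threshold $k$ in your witness must be fixed \emph{before} $\psi$ is given (the theorem quantifies $\exists\varphi\,\forall\psi\,\exists M$), which is easily arranged by normalising the seller's budget to a fixed constant, e.g.\ $1$, in every mechanism you construct.
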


As promised, the greater expressive power comes with a higher model checking complexity.


\begin{restatable}{theorem}{mcsl}

    Model checking 
    $\mathcal{SL}^n$ is PSPACE-complete for the class of finite DAMs with the placement, payment, and utility functions computable in polynomial space. 
\end{restatable}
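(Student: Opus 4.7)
The plan is to prove PSPACE membership by extending Algorithm~1 with recursive cases for the strategic modalities, and to prove PSPACE-hardness via a polynomial-time reduction from TQBF, the canonical PSPACE-complete problem.

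For the upper bound, I extend the model-checking procedure so that on input $\langle \! [ \mathsf{C} ] \! \rangle \psi$ it enumerates, one at a time, all tuples $\overline{\beta_\mathsf{C}} \in (\mathsf{Nom}^{\bullet}_M)^{|\mathsf{C}|}$. For each such tuple, the feasibility condition $\langle \overline{\sigma_\mathsf{C}}: \overline{\beta_\mathsf{C}} \rangle \top$ is checked in polynomial space by inspecting the friendship relation and the budgets; if feasible, the algorithm then iterates universally over all $\overline{\beta_{\mathsf{S \setminus C}}}$, for each one computes the updated mechanism (still of polynomial size, as argued in the proof of Theorem~1) and calls itself recursively on $\psi$ over this updated mechanism. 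The operator $[ \! \langle \mathsf{C} \rangle \! ] \psi$ is handled dually (or equivalently via negation). The recursion depth is bounded by $|\varphi|$; at each level we only need to store the current formula, a polynomial-size mechanism, and the tuple currently being iterated, and on return every iterator reuses its space. Since the placement, payment and utility functions are assumed PSPACE-computable, every local evaluation also fits in polynomial space, so the whole procedure runs in polynomial space.

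For the lower bound, I reduce from TQBF. Given a closed QBF $\Phi = Q_1 x_1 \ldots Q_n x_n \, \phi(x_1, \ldots, x_n)$ with $\phi$ in 3-CNF, I build in polynomial time a single-seller DAM $M_\Phi$ and a formula $\varphi_\Phi \in \mathcal{SL}^1$ such that $M_\Phi, s \models \varphi_\Phi$ iff $\Phi$ is true. For each variable $x_i$ the mechanism contains distinguished buyers with nominals $true_i$ and $false_i$ attached to the seller through gadgets, so that incentivising one or the other corresponds to setting $x_i$ to true or false; the clauses of $\phi$ are encoded in an additional sub-network, in the spirit of the construction used in the proof of Theorem~2. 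The formula has the shape $\varphi_\Phi = \tilde{Q_1} \tilde{Q_2} \cdots \tilde{Q_n} \, \phi^\ast$, where $\tilde{Q_i} = \langle \! [ \{\sigma\} ] \! \rangle$ if $Q_i = \exists$ and $\tilde{Q_i} = [ \! \langle \{\sigma\} \rangle \! ]$ if $Q_i = \forall$, and $\phi^\ast \in \mathcal{L}^1$ is a modal formula that verifies, in the updated network, that the induced assignment satisfies every clause of $\phi$.

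The main technical obstacle is that a strategic modality quantifies over \emph{all} buyer nominals of $M_\Phi$, not only over the two intended ones $true_i$ and $false_i$ at level $i$. I therefore plan to calibrate budgets, incentives, and the friendship relation so that at depth $i$ of the nested modalities the only \emph{feasible} choices available to the seller are precisely the buyers named $true_i$ and $false_i$: any other choice fails the feasibility condition $\langle \beta \rangle \top$, which makes the universal modality $[ \! \langle \{\sigma\} \rangle \! ] \psi$ vacuously satisfied on that branch and restricts the witness of the existential modality $\langle \! [ \{\sigma\} ] \! \rangle \psi$ to an intended choice. This budget-and-tie-breaking calibration parallels the techniques of the NP-hardness proof of Theorem~2 and is where most of the combinatorial work of the reduction lies.
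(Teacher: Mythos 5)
Your upper bound is essentially the paper's: the same enumerate-witness / iterate-over-opponents extension of Algorithm~1, with the same depth-first space accounting, so that part is fine.

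The hardness direction follows the paper's high-level plan (single seller, TQBF, one nested strategic modality per variable, truth values read off from network membership), but you diverge on the one point where the real difficulty sits, and your chosen device does not work as stated. You propose to tame the quantification over \emph{all} buyer names by calibrating budgets, incentives, and friendships so that at depth $i$ only $true_i$ and $false_i$ are \emph{feasible}, making $[ \! \langle \{\sigma\} \rangle \! ] \psi$ vacuous on every unintended branch. The problem is the skip action: the semantics quantifies over $\mathsf{B}^\bullet$, and $\sigma : \bullet$ is excluded from the feasibility check (only $\overline{\sigma}^{\setminus \bullet}$ is tested), so the no-op branch is feasible no matter how you set budgets and incentives. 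Consequently the universal modality always has a feasible branch in which $x_i$ receives no truth value, and the existential modality can always take $\bullet$ as its witness, so the intended bijection between modality choices and assignments collapses. A second, related obstruction is that friendship only grows under updates, so buyers incentivised at earlier depths remain friends of the seller at depth $i$; ruling them out purely by budget arithmetic forces a delicate decreasing-cost scheme and still does nothing about $\bullet$. The paper avoids all of this by making \emph{everything} feasible (all incentives are $0$) and instead filtering syntactically: each modality is guarded by a formula $fixed_k$ asserting that exactly the first $k$ atoms have been given exactly one truth value, used as $fixed_k \to \varphi_{k-1}$ under the universal modality and $fixed_k \land \varphi_{k-1}$ under the existential one. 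Some such guard inside the scope of each modality appears unavoidable; you should replace (or at least supplement) the feasibility-calibration idea with it, after which your reduction becomes essentially the paper's.
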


\begin{proof}
To show that the problem is in PSPACE, we present Algorithm \ref{quantMC} that extends the P-time Algorithm \ref{diffMC}.
\begin{breakablealgorithm}
	\caption{An algorithm for model checking $\mathcal{SL}^n$}\label{quantMC} 
	\small
	\begin{algorithmic}[1] 		
		\Procedure{MC}{$M, a, \varphi$}		
\Case {$\varphi = \langle\! [ \mathsf{C} ] \!\rangle \psi$} 
\For{$b_1, ..., b_{|\mathsf{C}|} \in B$}
\State{flag: = \textit{true}}
\State{$\overline{\mathsf{\beta_{C}}} = \Nfin (b_1) ... \Nfin(b_{|\mathsf{C}|})$}
\If{\textsc{MC}($M, a, \langle {\overline{\sigma_\mathsf{C}} : \overline{\beta_\mathsf{C}}}\rangle\top$)}
    \For{$b_1, ..., b_{|\mathsf{S\setminus C}|} \in B$}
    \State{$\overline{\beta_{\mathsf{S \setminus C}}} = \Nfin (b_1) ... \Nfin(b_{|\mathsf{S\setminus C}|})$}
        \If{not \textsc{MC}($M, a, [{\overline{\sigma_\mathsf{C}} \cup \overline{\sigma_\mathsf{S \setminus C}} : \overline{\beta_\mathsf{C}} \cup \overline{\beta_\mathsf{{S \setminus C}}}}]\psi$)}
        \State{flag: = \textit{false}}
        \State{\textbf{break}}
        \EndIf
    \EndFor
\Else
\State{flag:= \textit{false}}
\EndIf
\If{flag}
\State{\textbf{return} \textit{true}}
\EndIf
\EndFor
\State{\textbf{return} \textit{false}}

\EndCase
   \EndProcedure

	\end{algorithmic}
\end{breakablealgorithm}

The only new case is $\varphi = \langle \! [ \mathsf{C} ] \! \rangle \psi$. The treatment of the case follows the semantics and thus the correctness follows. Indeed, to verify whether for a given agent 
we have $\langle \! [ \mathsf{C} ] \! \rangle \psi$, we look for a set of buyers that sellers in $\mathsf{C}$ will incentivise (lines 3--6), and then 
we check whether for all possible incentive diffusions by the remaining sellers (lines 7--8) we still can satisfy $\psi$ after each seller performs their action (line 9). Variable `flag' 
keeps 
track of whether we have found such a choice for the sellers in $\mathsf{C}$, and the algorithm returns $true$ is yes, and $false$ if not. 

Complexity analysis (sketch): observe that the algorithm checks an exponential number of subsets of buyers, and hence the running time is exponential in the size of the input. However, the algorithm uses only a polynomial amount of space. 
Note that we explore the tree of mechanism updates in a depth-first manner.  
The space required by a branch in such a tree, and hence by the algorithm, is bounded by $O(|\varphi|\cdot |M|^2)$. PSPACE-hardness is shown via a reduction from the QBF problem (see the Technical Appendix).
\end{proof}

\section*{Related Work}




\paragraph{Logics for auctions} 
Logic-based formalisms have been developed to capture and reason about various aspects of auctions. One direction is  \textit{bidding languages}, 
 most notably OR and XOR-based languages, which express the preferences of auction participants (see \cite{Nisan2000} for an overview).
These languages compactly represent possible bids on item combinations, an important aspect in combinatorial auctions, where bidders place bids on bundles of distinct items. 
Our work, instead, is closer to logical approaches for reasoning about and designing auctions. \citet{MittelmannJAAMAS22} proposed a lightweight formalism to represent auction rules, whereas \citet{MittelmannAAMAS22} addressed the representation of strategies in repeated auctions, and \citet{MittelmannHP21} captured bounded rationality in auctions.   
Another line of research proposes the use of variants of Strategy Logic (SL) for the design of auction mechanisms, exploiting verification and synthesis 
\cite{MittelmannAIJ2025,MittelmannMMP23}. Notably, the model-checking complexity for specifications in SL is non-elementary in the general case \cite{MogaveroMPV14}. 
There is also research on automated verification of auction protocols \cite{GargRSS25,CaminatiK0R15,CKMRWW13} that stresses the importance of formal verification techniques for auctions. 

Across all these works, the set of agents involved in the auction 
is fixed throughout its execution. 
To the best of our knowledge, our work is the first one to explore the dynamics of auction diffusion from a logic-based perspective. 


\paragraph{DEL and social network logics}
Our intuition of model updates stems from \textit{dynamic epistemic logic} (DEL), 
where one can model various information-changing events 
in the context of agents' knowledge. Ideas of DEL were also adopted in the field of \textit{social network logics} (SNLs), where one uses formal tools to study such phenomena on social networks as information diffusion \cite{christoff15,BaltagCRS19}, social influence \cite{ChristoffHP16}, and echo chambers \cite{PedersenSA19}, to name a few. See \cite[Chapter 3]{minathesis} for an overview. Perhaps the most related work here is \cite{galimullinP24}, where the authors explore visibility of posts on social networks, and how these posts propagate through the network, somewhat akin to how the information about an auction is spread in diffusion auctions. Using nominals for agent names is common in SNLs and comes from hybrid logic (see, e.g., \cite{ARECES2007821}). While discussing strategic logic $\mathcal{SL}^n$, we noted that our coalitional operators are inspired by those of CL and ATL. However, in CL and ATL models are static, i.e., they do not change as a result of agents' actions. Hence, a more relevant work is that on \textit{coalition announcements} \cite{agotnes08,galimullin21b,delima14} in DEL, where strategic operators quantify over model changes that agents can bring about in a competitive setting, and with model checking complexity being PSPACE-complete \cite{AlechinaDGW21}. 
Another related work is \cite{MaubertPSS20}, where agents play a multi-step concurrent game by modifying a model using modalities of DEL. Finally, there has been some work on adding arrows in modal logics 
\cite{ArecesFH15}.

\section{Conclusion}
We have presented a formal framework for reasoning about sellers' strategies in diffusion auctions. In particular, we introduced two logics, the $n$-seller logic for diffusion incentives $\mathcal{L}^n$ and its strategic version $\mathcal{SL}^n$, that can capture various properties of such auctions, like item allocations, utility increase, local properties of the underlying social network, and Nash equilibrium, to name a few. Our logics are \textit{dynamic}, and hence they also allow us to verify whether the above-mentioned properties hold after modifications of the underlying social network that are engendered by sellers incentivising buyers to invite their friends to join auctions. To the best of our knowledge, this is \textit{the first work that tackles the problem of formal verification of diffusion auctions}.  

Our definition of diffusion auction mechanisms is quite general and allows us to capture a variety of auction types as long as the complexity of computing the placement, payment, and utility functions is no higher than the complexity of the model-checking problem of the corresponding logic. 
We have shown that it is in P for $\mathcal{L}^n$ and is PSPACE-complete for $\mathcal{SL}^n$. Moreover, we have demonstrated that the complexity of the strategy existence problem for a given mechanism and a joint goal of sellers is NP-complete.   

With our work, we start a research line on formal verification of diffusion auctions, and there are plenty of interesting further directions. 
In particular, we would like to tackle the formal verification of 
a probabilistic framework, capturing incomplete information and Bayesian analysis
\cite{huang2025approximate}, as well as consider strategies of buyers. We have also mentioned some of the validities of our logics, and we find it very tempting to explore their axiomatisations. Finally, we plan to explore the case of multi-item diffusion auctions.

\section*{Acknowledgements}
This research is partially supported by the ANR project NOGGINS ANR-24-CE23-4402.

\bibliography{ref}

@book{algorithms,
author = {Sanjoy Dasgupta and Christos Papadimitriou and Umesh Vazirani},
title = {Algorithms},
year = {2006},
publisher = {McGraw-Hill Higher Education},
  }

@article{AlechinaDGW21,
  author       = {Natasha Alechina and
                  Hans van Ditmarsch and
                  Rustam Galimullin and
                  Tuo Wang},
  title        = {Verification and Strategy Synthesis for Coalition Announcement Logic},
  journal      = {Journal of Logic, Language and Information},
  volume       = {30},
  number       = {4},
  pages        = {671--700},
  year         = {2021},
  doi          = {10.1007/S10849-021-09339-6},
}

@inproceedings{DolgorukovGG24,
  author       = {Vitaliy Dolgorukov and
                  Rustam Galimullin and
                  Maksim Gladyshev},
  editor       = {Mehdi Dastani and
                  Jaime Sim{\~{a}}o Sichman and
                  Natasha Alechina and
                  Virginia Dignum},
  title        = {Dynamic Epistemic Logic of Resource Bounded Information Mining Agents},
  booktitle    = {Proceedings of the 23rd {AAMAS}},
  pages        = {481--489},
  publisher    = {{IFAAMAS}
                  / {ACM}},
  year         = {2024},
  doi          = {10.5555/3635637.3662898},
}

@inproceedings{CKMRWW13,
  author       = {Christoph Lange and
                  Marco B. Caminati and
                  Manfred Kerber and
                  Till Mossakowski and
                  Colin Rowat and
                  Makarius Wenzel and
                  Wolfgang Windsteiger},
  editor       = {Jacques Carette and
                  David Aspinall and
                  Christoph Lange and
                  Petr Sojka and
                  Wolfgang Windsteiger},
  title        = {A Qualitative Comparison of the Suitability of Four Theorem Provers
                  for Basic Auction Theory},
  booktitle    = { Proceedings of the 6th {CICM}},
  series       = {LNCS},
  volume       = {7961},
  pages        = {200--215},
  publisher    = {Springer},
  year         = {2013},
  doi          = {10.1007/978-3-642-39320-4\_13},
}

@inproceedings{CaminatiK0R15,
  author       = {Marco B. Caminati and
                  Manfred Kerber and
                  Christoph Lange and
                  Colin Rowat},
  editor       = {Tim Roughgarden and
                  Michal Feldman and
                  Michael Schwarz},
  title        = {Sound Auction Specification and Implementation},
  booktitle    = {Proceedings of the 16th  {EC}},
  pages        = {547--564},
  publisher    = {{ACM}},
  year         = {2015},
  doi          = {10.1145/2764468.2764511},
}

@article{GargRSS25,
  author       = {Mohit Garg and
                  N. Raja and
                  Suneel Sarswat and
                  Abhishek Kr Singh},
  title        = {Double Auctions: Formalization and Automated Checkers},
  journal      = {Journal of Automated Reasoning.},
  volume       = {69},
  number       = {3},
  pages        = {17},
  year         = {2025},
  doi          = {10.1007/S10817-025-09732-X},
}

@inproceedings{PedersenSA19,
  author       = {Mina Young Pedersen and
                  Sonja Smets and
                  Thomas {\AA}gotnes},
  editor       = {Patrick Blackburn and
                  Emiliano Lorini and
                  Meiyun Guo},
  title        = {Analyzing Echo Chambers: {A} Logic of Strong and Weak Ties},
  booktitle    = {Proceedings of the 7th {LORI}},
  series       = {LNCS},
  volume       = {11813},
  pages        = {183--198},
  publisher    = {Springer},
  year         = {2019},
  doi          = {10.1007/978-3-662-60292-8\_14},
}

@article{ChristoffHP16,
  author       = {Zo{\'{e}} Christoff and
                  Jens Ulrik Hansen and
                  Carlo Proietti},
  title        = {Reflecting on Social Influence in Networks},
  journal      = {Journal of Logic, Language and Information},
  volume       = {25},
  number       = {3-4},
  pages        = {299--333},
  year         = {2016},
  doi          = {10.1007/S10849-016-9242-Y},
}

@article{BaltagCRS19,
  author       = {Alexandru Baltag and
                  Zo{\'{e}} Christoff and
                  Rasmus K. Rendsvig and
                  Sonja Smets},
  title        = {Dynamic Epistemic Logics of Diffusion and Prediction in Social Networks},
  journal      = {Studia Logica},
  volume       = {107},
  number       = {3},
  pages        = {489--531},
  year         = {2019},
  doi          = {10.1007/S11225-018-9804-X},

}

@article{ArecesFH15,
  author       = {Carlos Areces and
                  Raul Fervari and
                  Guillaume Hoffmann},
  title        = {Relation-changing modal operators},
  journal      = {Logic Journal of the {IGPL}},
  volume       = {23},
  number       = {4},
  pages        = {601--627},
  year         = {2015},
  doi          = {10.1093/JIGPAL/JZV020},
}

@article{christoff15,
  author       = {Zo{\'{e}} Christoff and
                  Jens Ulrik Hansen},
  title        = {A logic for diffusion in social networks},
  journal      = {Journal of Applied Logic},
  volume       = {13},
  number       = {1},
  pages        = {48--77},
  year         = {2015},
  doi          = {10.1016/J.JAL.2014.11.011},
}

@article{galimullinP24,
  author       = {Rustam Galimullin and
                  Mina Young Pedersen},
  title        = {Visibility and exploitation in social networks},
  journal      = {Mathematical Structures in Computer Science},
  volume       = {34},
  number       = {7},
  pages        = {615--644},
  year         = {2024},
  doi          = {10.1017/S0960129523000397},
}

@article{fagin90,
  author       = {Ronald Fagin and
                  Joseph Y. Halpern and
                  Nimrod Megiddo},
  title        = {A Logic for Reasoning about Probabilities},
  journal      = {Information and Computation},
  volume       = {87},
  number       = {1/2},
  pages        = {78--128},
  year         = {1990},
  doi          = {10.1016/0890-5401(90)90060-U},
}

@article{pauly02,
  author    = {Marc Pauly},
  title     = {A modal logic for coalitional power in games},
  journal   = {Journal of Logic and Computation},
  volume    = {12},
  number    = {1},
  pages     = {149--166},
  year      = {2002},
  doi       = {10.1093/logcom/12.1.149},
}

@inproceedings{MaubertPSS20,
  author       = {Bastien Maubert and
                  Sophie Pinchinat and
                  Fran{\c{c}}ois Schwarzentruber and
                  Silvia Stranieri},
editor       = {Christian Bessiere},
  title        = {Concurrent Games in Dynamic Epistemic Logic},
  booktitle    = {Proceedings of the 29th {IJCAI}},
  pages        = {1877--1883},
  publisher    = {ijcai.org},
  year         = {2020}
}

@article{galimullin21b,
  author    = {Rustam Galimullin},
  title     = {Coalition and relativised group announcement logic},
  journal   = {Journal of Logic, Language and Information},
  volume    = {30},
  number    = {3},
  pages     = {451--489},
  year      = {2021},
  doi       = {10.1007/s10849-020-09327-2},
}

@phdthesis{minathesis,
  author    = {Mina Young Pedersen},
  title     = {Malicious Agents and the Power of the Few : On the Logic of Abnormality in Social Networks},
  school    = {University of Bergen, Norway},
  year      = {2024},
url = {https://bora.uib.no/bora-xmlui/handle/11250/3151733},
}

@inproceedings{agotnes08,
  author       = {Thomas {\AA}gotnes and
                  Hans van Ditmarsch},
  editor       = {Lin Padgham and
                  David C. Parkes and
                  J{\"{o}}rg P. M{\"{u}}ller and
                  Simon Parsons},
  title        = {Coalitions and announcements},
  booktitle    = {Proceedings of the 7th {AAMAS}},
  pages        = {673--680},
  publisher    = {{IFAAMAS}},
  year         = {2008},
}

@article{delima14,
  author       = {Tiago de Lima},
  title        = {Alternating-time temporal dynamic epistemic logic},
  journal      = {Journal of Logic and Computation},
  volume       = {24},
  number       = {6},
  pages        = {1145--1178},
  year         = {2014},
  doi          = {10.1093/LOGCOM/EXS061},
}

@article{BelardinelliCDJ21,
  author       = {Francesco Belardinelli and
                  Rodica Condurache and
                  Catalin Dima and
                  Wojciech Jamroga and
                  Michal Knapik},
  title        = {Bisimulations for verifying strategic abilities with an application
                  to the ThreeBallot voting protocol},
  journal      = {Information and Computation},
  volume       = {276},
  pages        = {104552},
  year         = {2021}
}

@article{JamrogaKM22,
  author       = {Wojciech Jamroga and
                  Damian Kurpiewski and
                  Vadim Malvone},
  title        = {How to measure usable security: Natural strategies in voting protocols},
  journal      = {Journal of Computer Security},
  volume       = {30},
  number       = {3},
  pages        = {381--409},
  year         = {2022}
}

@book{clarke2018handbook,
  editor       = {Edmund M. Clarke and
                  Thomas A. Henzinger and
                  Helmut Veith and
                  Roderick Bloem},
  title        = {Handbook of Model Checking},
  publisher    = {Springer},
  year         = {2018}
}

@article{mcafee1992dominant,
  title={A dominant strategy double auction},
  author={McAfee, R Preston},
  journal={Journal of economic Theory},
  volume={56},
  number={2},
  pages={434--450},
  year={1992},
  publisher={Elsevier}
}

@article{huang2025approximate,
  title={Approximate Revenue Maximization for Diffusion Auctions},
  author={Huang, Yifan and Hao, Dong and Fan, Zhiyi and Guo, Yuhang and Li, Bin},
  journal      = {CoRR},
volume = {abs/2507.14470},
doi = {
10.48550/arXiv.2507.14470
},
  year={2025}
}

@inproceedings{Nisan2000,
  author       = {Noam Nisan},
  editor       = {Anant Jhingran and
                  Jeff MacKie{-}Mason and
                  Doug J. Tygar},
  title        = {Bidding and allocation in combinatorial auctions},
  booktitle    = {Proceedings of the 2nd {EC}},
  pages        = {1--12},
  publisher    = {{ACM}},
  year         = {2000},
  doi          = {10.1145/352871.352872},
}

@article{MittelmannAIJ2025, 
title = {Formal verification and synthesis of mechanisms for social choice},
journal = {Artificial Intelligence},
volume = {339},
pages = {104272},
year = {2025},
issn = {0004-3702},
doi = {https://doi.org/10.1016/j.artint.2024.104272},
url = {https://www.sciencedirect.com/science/article/pii/S000437022400208X},
author = {Munyque Mittelmann and Bastien Maubert and Aniello Murano and Laurent Perrussel}
}

@article{MittelmannJAAMAS22,
  author       = {Munyque Mittelmann and
                  Sylvain Bouveret and
                  Laurent Perrussel},
  title        = {{Representing and reasoning about auctions}},
  journal      = {Autonomous Agents and Multi-Agent Systems},
  volume       = {36},
  number       = {1},
  pages        = {20},
  year         = {2022} 
}

@inproceedings{MittelmannAAMAS22,
  author       = {Francesco Belardinelli and
                  Wojtek Jamroga and
                  Vadim Malvone and
                  Munyque Mittelmann and
                  Aniello Murano and
                  Laurent Perrussel},
editor       = {Piotr Faliszewski and
                  Viviana Mascardi and
                  Catherine Pelachaud and
                  Matthew E. Taylor},
  title        = {{Reasoning about Human-Friendly Strategies in Repeated Keyword Auctions}},
  booktitle    = {Proceedings of the 21st {AAMAS}},
  pages        = {62--71}, 
  year         = {2022}
}

@inproceedings{dobzinski2012computational,
  title={The computational complexity of truthfulness in combinatorial auctions},
  editor       = {Boi Faltings and
                  Kevin Leyton{-}Brown and
                  Panos Ipeirotis},
  author={Dobzinski, Shahar and Vondr{\'a}k, Jan},
  booktitle={Proceedings of the 13th {EC}},
  pages={405--422},
  year={2012}
}

@article{xu2020design,
  title={Design of double auction mechanism based on social network},
  author={Xu, Junping and He, Xin},
  journal={IEEE Access},
  volume={8},
  pages={8324--8335},
  year={2020},
  publisher={IEEE}
}

@book{Nisan2007,
  editor       = {Noam Nisan and
                  Tim Roughgarden and
                  {\'{E}}va Tardos and
                  Vijay V. Vazirani},
  title        = {Algorithmic Game Theory},
  publisher    = {CUP},
  year         = {2007},
  doi          = {10.1017/CBO9780511800481},
}

@inproceedings{MittelmannHP21,
  author       = {Munyque Mittelmann and
                  Andreas Herzig and
                  Laurent Perrussel},
  editor       = {Wolfgang Faber and
                  Gerhard Friedrich and
                  Martin Gebser and
                  Michael Morak},
  title        = {Epistemic Reasoning About Rationality and Bids in Auctions},
  booktitle    = {Proceedings of the 17th {JELIA}},
  series       = {LNCS},
  volume       = {12678},
  pages        = {116--130},
  publisher    = {Springer},
  year         = {2021},
  doi          = {10.1007/978-3-030-75775-5\_9},
}

@inproceedings{zhao2021mechanism,
  author       = {Dengji Zhao},
  editor       = {Frank Dignum and
                  Alessio Lomuscio and
                  Ulle Endriss and
                  Ann Now{\'{e}}},
  title        = {Mechanism Design Powered by Social Interactions},
  booktitle    = {Proceedings of the 20th {AAMAS}},
  pages        = {63--67},
  publisher    = {{ACM}},
  year         = {2021},
  doi          = {10.5555/3463952.3463965},
}

@book{hvdetal.del:2007,
author = {Hans van Ditmarsch and Wiebe van der Hoek and Barteld Kooi},
title = {Dynamic Epistemic Logic},
series = {Synthese Library},
volume = {337},
publisher = {Springer},
year = {2008}	}

@article{vickrey1961counterspeculation,
  title={Counterspeculation, auctions, and competitive sealed tenders},
  author={Vickrey, William},
  journal={The Journal of Finance},
  volume={16},
  number={1},
  pages={8--37},
  year={1961},
  publisher={JSTOR}
}

@article{clarke1971multipart,
  title={Multipart pricing of public goods},
  author={Clarke, Edward H},
  journal={Public choice},
  pages={17--33},
  year={1971},
  publisher={JSTOR}
}

@article{groves1973incentives,
  title={Incentives in teams},
  author={Groves, Theodore},
  journal={Econometrica: Journal of the Econometric Society},
  pages={617--631},
  year={1973},
  publisher={JSTOR}
}

@inproceedings{MittelmannMMP23,
  author       = {Munyque Mittelmann and
                  Bastien Maubert and
                  Aniello Murano and
                  Laurent Perrussel},
  editor       = {Brian Williams and
                  Yiling Chen and
                  Jennifer Neville},
  title        = {Formal Verification of Bayesian Mechanisms},
  booktitle    = {Proceedings of the 37th {AAAI}},
  pages        = {11621--11629},
  publisher    = {{AAAI} Press},
  year         = {2023},
  doi          = {10.1609/AAAI.V37I10.26373},
}

@article{MogaveroMPV14,
  author       = {Fabio Mogavero and
                  Aniello Murano and
                  Giuseppe Perelli and
                  Moshe Y. Vardi},
  title        = {Reasoning About Strategies: On the Model-Checking Problem},
  journal      = {{ACM} Transactions on Computational Logic},
  volume       = {15},
  number       = {4},
  pages        = {34:1--34:47},
  year         = {2014}
}

@article{franceschet06,
  author       = {Massimo Franceschet and
                  Maarten de Rijke},
  title        = {Model checking hybrid logics (with an application to semistructured
                  data)},
  journal      = {Journal of Applied Logic},
  volume       = {4},
  number       = {3},
  pages        = {279--304},
  year         = {2006},
  doi          = {10.1016/J.JAL.2005.06.010},
}

@ARTICLE{alur2002,
  author = {Rajeev Alur and
               Thomas A. Henzinger and
               Orna Kupferman},
  title = {Alternating-time temporal logic},
  journal = {Journal of the ACM},
  year = {2002},
  volume = {49},
  pages = {672--713},
  doi       = {10.1145/585265.585270}
}

@incollection{ARECES2007821,
title = {Hybrid logics},
editor = {Patrick Blackburn and Johan {Van Benthem} and Frank Wolter},
series = {Studies in Logic and Practical Reasoning},
publisher = {Elsevier},
volume = {3},
pages = {821-868},
year = {2007},
booktitle = {Handbook of Modal Logic},
issn = {1570-2464}, 
author = {Carlos Areces and Balder {ten Cate}},
abstract = {Publisher Summary
This chapter discusses the proof theory, expressivity, and complexity of a number of the well-known hybrid logics and provides a snapshot of the logical territory lying between the basic modal languages and their classical companions. Standard modal logics use modalities for talking about the relations in relational structure. Hybrid logics are extensions of standard modal logics, involving symbols that name individual states in models. Hybrid logic arises when mechanisms for naming and asserting identity of worlds are added; to give an analogy, they are to standard modal systems what first-order languages with equality are to equality-free languages. The chapter provides an overview of the field of hybrid logic and presents a coherent picture of the current state of affairs in the field of hybrid logic. The most important techniques and results in the field, with respect to completeness, expressive power, frame definability, interpolation, and complexity are discussed. The proof systems for hybrid languages such as sequents, natural deduction, or tableaux, and some issues concerning the development of automated provers based on them are also described.}
}

@article{LiHGZ22,
  author       = {Bin Li and
                  Dong Hao and
                  Hui Gao and
                  Dengji Zhao},
  title        = {Diffusion auction design},
  journal      = {Artificial Intelligence},
  volume       = {303},
  pages        = {103631},
  year         = {2022},
  url          = {https://doi.org/10.1016/j.artint.2021.103631},
  doi          = {10.1016/J.ARTINT.2021.103631},
  timestamp    = {Sat, 08 Jan 2022 02:24:15 +0100},
  biburl       = {https://dblp.org/rec/journals/ai/LiHGZ22.bib},
  bibsource    = {dblp computer science bibliography, https://dblp.org}
}

@inproceedings{zhao2018selling,
  author       = {Dengji Zhao and
                  Bin Li and
                  Junping Xu and
                  Dong Hao and
                  Nicholas R. Jennings},
  editor       = {Elisabeth Andr{\'{e}} and
                  Sven Koenig and
                  Mehdi Dastani and
                  Gita Sukthankar},
  title        = {Selling Multiple Items via Social Networks},
  booktitle    = {Proceedings of the 17th {AAMAS}},
  pages        = {68--76},
  publisher    = {{IFAAMAS}},
  year         = {2018},
}

@inproceedings{guo2021emerging,
  author       = {Yuhang Guo and
                  Dong Hao},
  editor       = {Zhi{-}Hua Zhou},
  title        = {Emerging Methods of Auction Design in Social Networks},
  booktitle    = {Proceedings of the 30th {IJCAI} },
  pages        = {4434--4441},
  publisher    = {ijcai.org},
  year         = {2021},
  doi          = {10.24963/IJCAI.2021/605},
}

@inproceedings{ZhangZZ24,
  author       = {Yao Zhang and
                  Shanshan Zheng and
                  Dengji Zhao},
  title        = {Optimal Diffusion Auctions},
  booktitle    = {Proceedings of the 27th {ECAI}},
  series       = {Frontiers in Artificial Intelligence and Applications},
  volume       = {392},
  pages        = {3501--3508},
  publisher    = {{IOS} Press},
  year         = {2024},
  url          = {https://doi.org/10.3233/FAIA240903},
  doi          = {10.3233/FAIA240903},
  timestamp    = {Mon, 03 Mar 2025 21:02:32 +0100},
  biburl       = {https://dblp.org/rec/conf/ecai/ZhangZZ24.bib},
  bibsource    = {dblp computer science bibliography, https://dblp.org}
}

\clearpage
\appendix
\section{Technical Appendix}
\strex*
\begin{proof}
   That the strategy existence problem is in NP follows from the fact that model checking $\mathcal{L}^n$ is in P. Indeed, since the mechanism is finite and sellers have a finite budget, there is an upper bound on the length of $\langle \overline{\sigma}: \overline{\beta} \rangle^\ast$. In particular, the length of $\langle \overline{\sigma}: \overline{\beta} \rangle^\ast$ is polynomial in the size of $M$ (in the worst case, each seller incentivises each buyer and the total number of incentivisations is bounded by sellers' budgets that are a part of the input). Therefore, we can guess a sequence $\langle \overline{\sigma}: \overline{\beta} \rangle^\ast$ of at most polynomial size, and then verify $M,s \models \langle \overline{\sigma}: \overline{\beta} \rangle^\ast \varphi$ for sellers $s \in S$ in polynomial time.

    For NP-hardness, it is enough to show only the case of the single seller mechanism. To this end, we employ a reduction from 3-SAT. Let $\Psi = \bigwedge_{1 \leqslant i \leqslant k} \beta_i$ be a 3-SAT instance with $\beta_i = \gamma^i_1 \lor \gamma^i_2 \lor \gamma^i_3$, where $\gamma^i$ are literals. Moreover, let $\Delta = \{\delta_1,..., \delta_n\}$ be the set of atoms occurring in $\Psi$.

    Given a $\Psi$, we will construct a 1-DAM $M_\Psi$ with arbitrary polynomially computable $P$, $\Pay$, $Ut$. The set of agents $Agt = \{s, b_1,...,b_k,$ $c^1_{1,1}, c^1_{2,1}, c^1_{3, 1}, ....,$ $c^k_{3, n}, d_1,...,$ $d_n, e^1_1, e^1_2, ..., e^n_1, e^n_2, t_1, ..., t_n, f_1, ..., f_n\}$, where $s$ is the seller, agents $b_i$ correspond to clauses in $\Psi$, agents $c^i_{j,l}$ correspond to literals in a given clause $i$ with $j$ being the position of the literal in the clause and $l$ the number of the corresponding atom from $\Delta$, agents $d_i$ correspond to atoms in $\Delta$, agents $e^i_j$ help to split the valuations of atoms, and $t_i$ and $f_i$ correspond to setting the corresponding atom to \textit{true} or \textit{false}. Relation $F$ is set up in a hierarchical way, with $F(s, \{b_i | 1 \leqslant i \leqslant k\})$, $F(b_i, \{c^i_{1,l}, c^i_{2,l'}, c^i_{3,l''}\})$ for all $b_i$, $F(c^i_{j,l}, d_l)$ for all $c^i_{j,l}$ such that $d_l$ is the corresponding atom for literal $c^i_{j,l}$, $F(d_i, \{e^i_1, e^i_2\})$ for all $d_i$, and $F(e^i_1, t_i)$ and $F(e^i_2, f_i)$ for all $e^i_j$. Further, $Bdg$ and $V$ are defined arbitrarily, and $I(a) = 0$ for all $a\in Agt$. Finally, $N(\sigma) = s$, $N(\beta_i) = b_i$, $N(\gamma^i_{j,l}) = c^i_{j,l}$ if the corresponding literal is positive and $N(\overline{\gamma^i_j}) = c^i_j$ otherwise, $N(\delta_i) = d_i$, $N(\epsilon^i_j) = e^i_j$,  $N(true_i) = t_i$, and $N(false_i) = f_i$. All other names are distributed arbitrarily\footnote{The reader might have noticed that we use the same symbols for atoms in $\Psi$ and names of $d$-agents. This is done on purpose to make Figure \ref{fig:satexample} more readable and to highlight that $d$-agents correspond to atoms. Similarly for clauses and literals.}.

    As an example of the construction, consider the 3-SAT instance $\Psi = (\delta_1 \lor \delta_2 \lor \delta_3) \land (\lnot \delta_1 \lor \delta_3 \lor \lnot \delta_4)$, and the corresponding mechanism $M_\Psi$ in Figure \ref{fig:satexample}.  

    \begin{figure}[h!]
\centering
\scalebox{0.8}{
   \begin{tikzpicture}[rotate = 90]
\node[circle,draw=black, minimum size=4pt,inner sep=0pt, fill = black, label=above right:{$s:\sigma$}](s) at (0,0) {};
\node[circle,draw=black, minimum size=4pt,inner sep=0pt,, fill = black , label=left:{$b_1:\beta_1$}](b1) at (-4.5,-1.5) {};
\node[circle,draw=black, minimum size=4pt,inner sep=0pt, , fill = black, label=left:{$b_2:\beta_2$}](b2) at (4.5,-1.5) {};
\node[circle,draw=black, minimum size=4pt,inner sep=0pt,, fill = black , label=left:{$c_{1,1}^1:\gamma_{1,1}^1$}](c11) at (-6.5,-3) {};
\node[circle,draw=black, minimum size=4pt,inner sep=0pt,, fill = black , label=above:{$c_{2,2}^1:\gamma_{2,2}^1$}](c12) at (-4.5,-3) {};
\node[circle,draw=black, minimum size=4pt,inner sep=0pt,, fill = black , label=left:{$c_{3,3}^1:\gamma_{3,3}^1$}](c13) at (-2.5,-3) {};

\node[circle,draw=black, minimum size=4pt,inner sep=0pt,, fill = black , label=right:{$c_{3,4}^2:\overline{\gamma_{3,4}^2}$}](c23) at (6.5,-3) {};
\node[circle,draw=black, minimum size=4pt,inner sep=0pt,, fill = black , label=above:{$c_{2,3}^2:\gamma_{2,3}^2$}](c22) at (4.5,-3) {};
\node[circle,draw=black, minimum size=4pt,inner sep=0pt,, fill = black , label=left:{$c_{1,1}^2:\overline{\gamma_{1,1}^2}$}](c21) at (2.5,-3) {};

\node[circle,draw=black, minimum size=4pt,inner sep=0pt,, fill = black , label=below:{$d_2:\delta_2$}](d2) at (-5,-5) {};
\node[circle,draw=black, minimum size=4pt,inner sep=0pt,, fill = black , label=left:{$d_1:\delta_1$}](d1) at (-2,-5) {};
\node[circle,draw=black, minimum size=4pt,inner sep=0pt,, fill = black , label=left:{$d_3:\delta_3$}](d3) at (2,-5) {};
\node[circle,draw=black, minimum size=4pt,inner sep=0pt,, fill = black , label=below:{$d_4:\delta_4$}](d4) at (5,-5) {};

\node[circle,draw=black, minimum size=4pt,inner sep=0pt,, fill = black , label=below:{$e^2_1$}](e21) at (-6,-6.5) {};
\node[circle,draw=black, minimum size=4pt,inner sep=0pt,, fill = black , label=below:{$e^2_2$}](e22) at (-4,-6.5) {};
\node[circle,draw=black, minimum size=4pt,inner sep=0pt,, fill = black , label=above:{$e^1_1$}](e11) at (-3,-6.5) {};
\node[circle,draw=black, minimum size=4pt,inner sep=0pt,, fill = black , label=below:{$e^1_2$}](e12) at (-1,-6.5) {};
\node[circle,draw=black, minimum size=4pt,inner sep=0pt,, fill = black , label=below:{$e^3_1$}](e31) at (1,-6.5) {};
\node[circle,draw=black, minimum size=4pt,inner sep=0pt,, fill = black , label=below:{$e^3_2$}](e32) at (3,-6.5) {};
\node[circle,draw=black, minimum size=4pt,inner sep=0pt,, fill = black , label=below:{$e^4_1$}](e41) at (4,-6.5) {};
\node[circle,draw=black, minimum size=4pt,inner sep=0pt,, fill = black , label=above:{$e^4_2$}](e42) at (6,-6.5) {};

\node[circle,draw=black, minimum size=4pt,inner sep=0pt,, fill = black , label=below:{$t_2:true_2$}](t2) at (-6,-8) {};
\node[circle,draw=black, minimum size=4pt,inner sep=0pt,, fill = black , label=below:{$f_2:false_2$}](f2) at (-4,-8) {};
\node[circle,draw=black, minimum size=4pt,inner sep=0pt,, fill = black , label=above:{$t_1:true_1$}](t1) at (-3,-8) {};
\node[circle,draw=black, minimum size=4pt,inner sep=0pt,, fill = black , label=below:{$f_1:false_1$}](f1) at (-1,-8) {};
\node[circle,draw=black, minimum size=4pt,inner sep=0pt,, fill = black , label=above:{$t_3:true_3$}](t3) at (1,-8) {};
\node[circle,draw=black, minimum size=4pt,inner sep=0pt,, fill = black , label=below:{$f_3:false_3$}](f3) at (3,-8) {};
\node[circle,draw=black, minimum size=4pt,inner sep=0pt,, fill = black , label=above:{$t_4:true_4$}](t4) at (4,-8) {};
\node[circle,draw=black, minimum size=4pt,inner sep=0pt,, fill = black , label=below:{$f_4:false_4$}](f4) at (6,-8) {};

\draw[thick] (s) to (b1);
\draw[thick] (s) to (b2);

\draw[thick] (c11) to (b1);
\draw[thick] (c12) to (b1);
\draw[thick] (c13) to (b1);

\draw[thick] (c21) to (b2);
\draw[thick] (c22) to (b2);
\draw[thick] (c23) to (b2);

\draw[thick] (c11) to (d1);
\draw[thick] (c21) to (d1);
\draw[thick] (c12) to (d2);
\draw[thick] (c13) to (d3);
\draw[thick] (c22) to (d3);
\draw[thick] (c23) to (d4);

\draw[thick] (e11) to (d1);
\draw[thick] (e12) to (d1);
\draw[thick] (e21) to (d2);
\draw[thick] (e22) to (d2);
\draw[thick] (e31) to (d3);
\draw[thick] (e32) to (d3);
\draw[thick] (e41) to (d4);
\draw[thick] (e42) to (d4);

\draw[thick] (e11) to (t1);
\draw[thick] (e12) to (f1);
\draw[thick] (e21) to (t2);
\draw[thick] (e22) to (f2);
\draw[thick] (e31) to (t3);
\draw[thick] (e32) to (f3);
\draw[thick] (e41) to (t4);
\draw[thick] (e42) to (f4);

\draw[thick,dashed, bend left] (t3) to (s);

\end{tikzpicture}
 }
\caption{Mechanism $M$ for the 3-SAT instance $\Psi$ with one new link being dashed, and other new links being omitted for readability.}
\label{fig:satexample}
\end{figure}
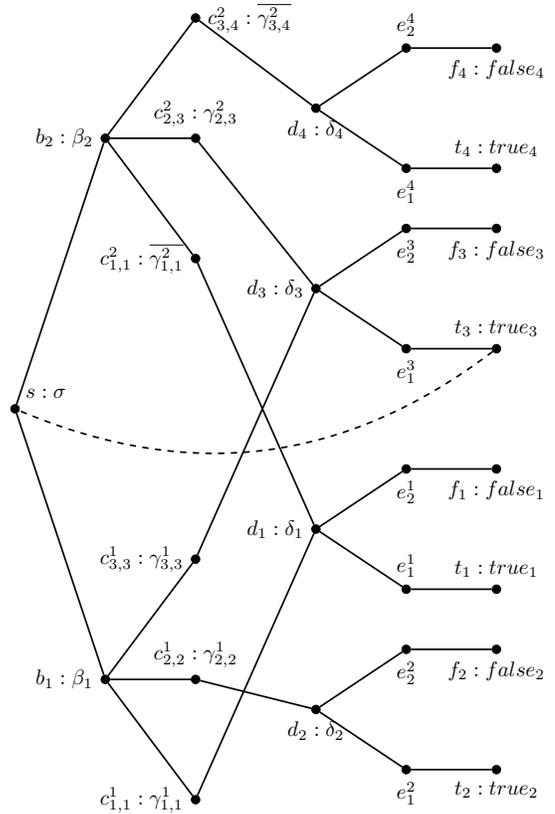 

As for goal formula $\varphi_\Psi$, we take $$\varphi_\Psi = \square 
\left( \beta_i \to \Diamond 
\left((\gamma^i_{j,l} \lor \overline{\gamma^i_{j,l}}) \land \psi_0 \land \psi_1\right)\right)$$
$\text{ for } 1\leqslant i \leqslant k, 1 \leqslant j \leqslant 3 \text { and } 1 \leqslant l \leqslant n$,
where $$\psi_0 = \overline{\gamma^i_{j,l}} \to (\Diamond \Diamond \Diamond (
false_l \land \Diamond \sigma) \land \lnot \Diamond \Diamond \Diamond (
true_l \land \Diamond \sigma))$$ $$\psi_1 = \gamma^i_{j,l} \to (\Diamond \Diamond \Diamond (
true_l \land \Diamond \sigma) \land  \lnot \Diamond \Diamond \Diamond (
false_l \land \Diamond \sigma)).$$

The goal formula is constructed in such a way in order to mimic, together with the model $M_\Psi$, the instance of 3-SAT $\Psi$. Hence, it is easy to verify that there is a sequence of incentives for the seller such that $M_\Psi, s \models \langle \alpha \rangle^\ast \varphi_\Psi$ if and only if $\Psi$ is satisfiable. Indeed, to satisfy $\Psi$ we need to satisfy all its clauses, which, in the goal formula, corresponds to the fact that we visit all friends of $s$ with names $\beta_i$. Then, for each clause, we need to find at least one literal, positive or negative, which is true. In the goal formula, this corresponds to checking the existence of a at least one friend (of three total) of a given $\beta_i$ which is a literal (part $\gamma^i_{j,l} \lor \overline{\gamma^i_{j,l}}$), and if the literal is negative (case $\psi_0$), then the corresponding \textit{false} agent is reachable that participates in the auction, and no corresponding \textit{true} agent is reachable such that the agent is friends with the seller (i.e. that the truth value of the corresponding atom is set only to \textit{false}). Similarly for $\psi_1$. 

In other words, if $\Psi$ is satisfiable, then an atom $\delta_i$ in $\Psi$ is true if and only if the corresponding $t_i$ agent participates in the auction. And vice versa, if there is a sequence of incentives that satisfies the goal formula, then we set an atom $\delta_i$ in $\Psi$ to true if and only if the corresponding $t_i$ agent is participating in the auction.  

For the given example of $\Psi$, the sellers strategy could be $\langle \beta_1\rangle \langle \gamma^1_{3,3} \rangle \langle \delta_3 \rangle \langle \epsilon_1^3 \rangle$, and the result of executing such a strategy is presented in Figure \ref{fig:satexample} (with one dashed new line, all other new lines are omitted for readability). It is easy to verify that in the updated mechanism, the goal formula holds, which, for the given example, corresponds to setting atom $\delta_3$ to true and thus satisfying $\Psi$.
\end{proof}

\slexp*
\begin{proof}
         We need to show that there is a formula of $\mathcal{SL}^n$ for which there is no equiavalent formula of $\mathcal{L}^n$. Consider $\langle \! [ \sigma ] \! \rangle \Diamond \gamma \in \mathcal{SL}^1$ constructed over the single seller, and assume towards a contradiction that there is an equivalent formula $\varphi \in \mathcal{L}^1$. Recall that $name(\varphi)$ be the set of nominals appearing in $\varphi$. The set is finite since $\varphi$ is finite. 
     
     Now consider the following two 1-DAMs depicted in Figure \ref{fig:exprproof}, where $\alpha_1,...,\alpha_n$ are all buyer names from $name(\varphi) \setminus \{\gamma\}$.
     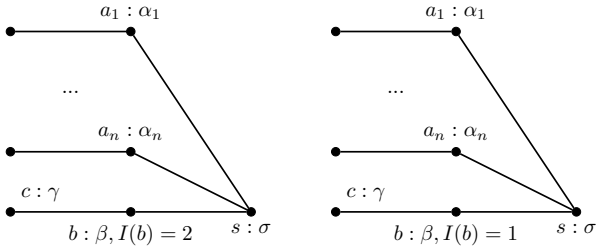
\begin{figure}[h!]
\centering
\scalebox{0.8}{
   \begin{tikzpicture}[rotate = -90]
\node[circle,draw=black, minimum size=4pt,inner sep=0pt, fill = black, label=below:{$s:\sigma$}](s) at (0,0) {};
\node[circle,draw=black, minimum size=4pt,inner sep=0pt,, fill = black , label=above:{$a_1:\alpha_1$}](a10) at (-3,-2) {};
\node[circle,draw=black, minimum size=4pt,inner sep=0pt, , fill = black, label=below:{$b:\beta, I(b) = 2$}](a21) at (0,-2) {};
\node[circle,draw=black, minimum size=4pt,inner sep=0pt, , fill = black, label=above:{$a_n:\alpha_n$}](a20) at (-1,-2) {};

\node[circle,draw=black, minimum size=4pt,inner sep=0pt,, fill = black ](b10) at (-3,-4) {};
\node[circle,draw=black, minimum size=4pt,inner sep=0pt, , fill = black, label=above right:{$c:\gamma$}](b21) at (0,-4) {};
\node[circle,draw=black, minimum size=4pt,inner sep=0pt, , fill = black](b20) at (-1,-4) {};

\node (dots) at (-2,-3) {$...$};

\draw[thick] (s) to (a10);
\draw[thick] (s) to (a20);
\draw[thick] (s) to (a21);

\draw[thick] (b10) to (a10);
\draw[thick] (b20) to (a20);
\draw[thick] (b21) to (a21);

\end{tikzpicture}
\hspace{7mm}
   \begin{tikzpicture}[rotate = -90]
\node[circle,draw=black, minimum size=4pt,inner sep=0pt, fill = black, label=below:{$s:\sigma$}](s) at (0,0) {};
\node[circle,draw=black, minimum size=4pt,inner sep=0pt,, fill = black , label=above:{$a_1:\alpha_1$}](a10) at (-3,-2) {};
\node[circle,draw=black, minimum size=4pt,inner sep=0pt, , fill = black, label=below:{$b:\beta, I(b) = 1$}](a21) at (0,-2) {};
\node[circle,draw=black, minimum size=4pt,inner sep=0pt, , fill = black, label=above:{$a_n:\alpha_n$}](a20) at (-1,-2) {};

\node[circle,draw=black, minimum size=4pt,inner sep=0pt,, fill = black ](b10) at (-3,-4) {};
\node[circle,draw=black, minimum size=4pt,inner sep=0pt, , fill = black, label=above right:{$c:\gamma$}](b21) at (0,-4) {};
\node[circle,draw=black, minimum size=4pt,inner sep=0pt, , fill = black](b20) at (-1,-4) {};

\node (dots) at (-2,-3) {$...$};

\draw[thick] (s) to (a10);
\draw[thick] (s) to (a20);
\draw[thick] (s) to (a21);

\draw[thick] (b10) to (a10);
\draw[thick] (b20) to (a20);
\draw[thick] (b21) to (a21);

\end{tikzpicture}

 }

\caption{Mechanisms $M_1$ (left) and $M_2$ (right).}
\label{fig:exprproof}
\end{figure} 

In the models, the seller has budget 1 and the incentives required by each her neighbour $a_i$ is 1, i.e. the seller can incentivise only one buyer. Budgets of all buyers are 0. Buyer $b$, whose name $\beta$ does not appear in $names (\varphi)$ (we can assume the existence of such a name as the set of nominals is countably infinite) has incentive 2 in $M_1$ and 1 in the model $M_2$. It is easy to see that $M_1,s \not \models \langle \! [ \sigma ] \! \rangle \Diamond \gamma$, as the only way for the seller to reach buyer $\gamma$ is by incentivising $\beta$ to invite them. Since the seller's budget is 1, they do not have enough resources to do so. At the same time, $M_2,s  \models \langle \! [ \sigma ] \! \rangle \Diamond \gamma$. Again, even though $\beta$ does not appear explicitly in neither $\varphi$ nor in $\langle \! [ \sigma ] \! \rangle \Diamond \gamma$, modality $\langle \! [ \sigma ] \! \rangle$ still quantifies over this name.

To see that $M_1,s\models \varphi$ if and only if $M_2,s \models \varphi$, it is enough to notice that the distributions of names in mechanisms are identical, and, moreover, since $\beta$ does not appear in $\varphi$, $\beta$ is not a part of the dynamic operators. Finally, we cannot spot the difference in the incentives for agent $b$ in the two mechanism without using incentivisations, which is ruled out due to $\beta$ not being a part of $\varphi$. 
\end{proof}

\mcsl*
\begin{proof}
      For PSPACE-hardness, we employ the reduction from the quantified Boolean formula (QBF) problem, which is known to be PSPACE-complete. The problem consists in determining whether a given QBF $\Psi = Q_1 p_1 ... Q_n p_n \psi(p_1,...,p_n)$ with $Q_i \in \{\forall, \exists\}$ is true. Without loss of generality, we assume that there are no free variables in $\Psi$. For the reduction, we will construct a mechanism $M_\Psi$ over one seller and a formula $\varphi_\Psi$ such that QBF $\Psi$ is true if and only if $M_\Psi, s \models \varphi_\Psi$. 

    Given a QBF $\Psi = Q_1 p_1 ... Q_n p_n \psi(p_1,...,p_n)$, mechanism $M_\Psi$ is the tuple with $Agt = \{s, a_1^0,a_1^1, ..., a_n^0, a_n^1, b_1^0, b_1^1, ..., b_n^0, b_n^1\}$ with $s$ being the seller. Relation $F$ is such that $F(s, a_i^j)$ for all $a_i^j$ and $F(a_i^j,b_i^j)$ for all $a_i^j$ and $b_i^j$. $Bdg$ and $V$ are defined arbitrarily, and $I(a,s) = 0$ for all $a \in Agt$. Functions $N(\sigma) = s$, $N(\alpha_i^j) = a^j_i$,  and $N(\beta_i^j) = b^j_i$. All other names are distributed arbitrarily. Finally, the placement, payment, and utility functions are arbitrary. 
    
    Intuitively, mechanism $M_\Psi$ consists of $2n+1$ agents, where agents $a^0_i$ and $b^0_i$ model the setting atom $p_i$ to \textit{false}, and agents $a^1_i$ and $b^1_i$ model the setting atom $p_i$ to \textit{true}. Which truth value of atom $p_i$ is chosen, \textit{true} or \textit{false}, will be determined whether the corresponding buyer is participating in the auction run by the seller $s$.

    We use the following translation of QBF $\Psi$ into a formula of our logic. First, formula $fixed_k$ intuitively means that truth values of first $k$ atoms in $\Psi$ were chosen:
    $$fixed_k = \bigwedge_{1 \leqslant i \leqslant k} (\Diamond \beta_i^0 \leftrightarrow \lnot \Diamond \beta_i^1) \land \bigwedge_{k < i \leqslant n} (\lnot \Diamond \beta^0_i \land \lnot \Diamond \beta^1_i).$$

    Now,
    \begin{align*}
    \varphi_0 &:= \psi(\Diamond \beta^1_1, ..., \Diamond \beta^1_n)\\
\varphi_k &:= 
\begin{cases} 
	[\! \langle \sigma \rangle \!] (fixed_k \to \varphi_{k-1}) &\text{if } Q_k = \forall\\
	\langle \! [ \sigma ] \! \rangle (fixed_k \land \varphi_{k-1}) &\text{if } Q_k = \exists\\
\end{cases}\\
\varphi_\Psi &:= \varphi_{n}.
\end{align*}

What is left to show is that $\Psi = Q_1 p_1 ... Q_n p_n \psi(p_1,...,p_n)$ is satisfiable if and only if $M_\Psi, s \models \varphi_\Psi$. For this, observe that $M_\Psi, s \models \Diamond \beta_i^1$, i.e. the buyer called $\beta^1_i$ is participating in the auction, holds if the the corresponding atom $p_i$ was set to \textit{true}. And similarly for $\beta_i^0$ and setting $p_i$ to \textit{false}. Since there is only a single seller $\sigma$, setting truth values of $p_i$'s happens sequentially one after another. Moreover, again since we have only the single seller, modalities $[\! \langle \sigma \rangle \!]$ (resp. $\langle \! [ \sigma ] \! \rangle$) correspond to universally (resp. existentially) quantifying over $p_i$. Guards $fixed_k$ guarantee that only truth values of only first $k$ atoms were chosen (conjunct $\bigwedge_{k < i \leqslant n} (\lnot \Diamond \beta^0_i \land \lnot \Diamond \beta^1_i)$), and, additionally, that those truth values were chosen unambiguously, i.e. exactly one truth value per atom (conjunct $\bigwedge_{1 \leqslant i \leqslant k} (\Diamond \beta_i^0 \leftrightarrow \lnot \Diamond \beta_i^1)$). Hence, together with $[\! \langle \sigma \rangle \!]$ and $\langle \! [ \sigma ] \! \rangle$, guards $fixed_k$ emulate quantifiers $\forall$ and $\exists$ in a QBF. Finally, the evaluation of the propositional part of the QBF is captured by the participation of the corresponding buyers in the auction. 
\end{proof}

\clearpage
\end{document}